\newcommand{\Suff}{\textit{Suff}}
\newcommand{\Pref}{\textit{Pref}}
\newcommand{\Fact}{\textit{Fact}}
\renewcommand{\alph}{\textit{alph}}
\renewcommand{\epsilon}{\varepsilon}
\newtheorem{theorem}{Theorem}[section]
\newtheorem{proposition}[theorem]{Proposition}
\newtheorem{lemma}[theorem]{Lemma}
\newtheorem{corollary}[theorem]{Corollary}
\theoremstyle{definition}
\newtheorem{definition}[theorem]{Definition}
\newtheorem{example}[theorem]{Example}
\newtheorem{remark}[theorem]{Remark}
\begin{document}

\begin{frontmatter}

\title{\textbf{Enumeration and Structure of Trapezoidal Words}\tnoteref{note1}}
\tnotetext[note1]{Part of the results in this paper were presented by the third author at \emph{WORDS 2011, 8th International Conference on Words}, 12-16 September 2011, Prague, Czech Republic \cite{Fi11}.}

\author{Michelangelo Bucci}
\ead{michelangelo.bucci@utu.fi}
\address{Department of Mathematics, University of Turku\\
FI-20014 Turku, Finland}
 
\author{Alessandro De Luca}
 \ead{alessandro.deluca@unina.it}
 \address{Dipartimento di Scienze Fisiche\\ Universit\`a degli Studi di Napoli Federico II\\ Via Cintia, Monte S.~Angelo, I-80126 Napoli, Italy}
 
\author{Gabriele Fici\corref{cor1}}
\ead{gabriele.fici@unice.fr}
\address{Laboratoire d'Informatique, Signaux et Syst\`emes de Sophia-Antipolis \\ CNRS \& Universit\'e de Nice-Sophia Antipolis\\ 2000, Route des Lucioles - 06903 Sophia Antipolis cedex, France}

\cortext[cor1]{Corresponding author.}

\journal{Theoretical Computer Science}

\begin{abstract}
Trapezoidal words  are words having at most $n+1$ distinct factors of length $n$ for every $n\ge 0$. They therefore encompass finite Sturmian words. We give combinatorial characterizations of trapezoidal words and exhibit a formula for their enumeration. We then separate trapezoidal words into two disjoint classes: open and closed. A trapezoidal word is closed if it has a factor that occurs only as a prefix and as a suffix; otherwise it is open. We investigate open and closed trapezoidal words, in relation with their special factors. We prove that Sturmian palindromes are closed trapezoidal words and that a closed trapezoidal word is a Sturmian palindrome if and only if its longest repeated prefix is a palindrome. We also define a new class of words, \emph{semicentral words}, and show that they are characterized by the property that they can be written as $uxyu$, for a central word $u$ and two different letters $x,y$. Finally, we investigate the prefixes of the Fibonacci word with respect to the property of being open or closed trapezoidal words, and show that the sequence of open and closed prefixes of the Fibonacci word follows the Fibonacci sequence.
\end{abstract}

\begin{keyword}
Sturmian words, trapezoidal words, rich words, closed words, special factors, palindromes, enumerative formula. 
\end{keyword}

\end{frontmatter}


\section{Introduction}\label{sec:intro}

In combinatorics on words, the most famous class of infinite words is certainly that of Sturmian words. They code digital straight lines in the discrete plane having irrational slope, and are characterized by the property of having exactly $n+1$ factors of length $n$, for every $n\ge 0$.

It is well known (see \cite{LothaireAlg}, Proposition 2.1.17) that a finite word $w$ is a factor of some Sturmian word if and only if $w$ is a binary \emph{balanced} word, that is, for every letter $a$ and every pair of factors of $w$  of the same length, $u$ and $v$, one has that $u$ and $v$ contain the same number of $a$'s up to one, i.e.,
\begin{equation}\label{eq:bal}
 ||u|_{a}-|v|_{a}|\le 1.
\end{equation}

Finite Sturmian words (finite factors of Sturmian words) have \emph{at most} $n+1$ factors of length $n$, for every $n\ge 0$. However, this property does not characterize them, as shown by the word $w=aaabab$, which is not Sturmian since the factors $aaa$ and $bab$ do not verify \eqref{eq:bal}.

The finite words defined by the property of having at most $n+1$ factors of length $n$, for every $n\ge 0$, are called \emph{trapezoidal words}. The name comes from the fact that the graph of the function counting the distinct factors of each length (usually called the factor complexity) defines, for trapezoidal words, an isosceles trapezoid.

Trapezoidal words were defined by de Luca \cite{Del99} by means of combinatorial parameters on the repeated factors of finite words. For a finite word $w$ over a finite alphabet $\Sigma$, let $K_{w}$ be the minimal length of an unrepeated suffix of $w$ and let $R_{w}$ be the minimal length for which $w$ does not contain right special factors, i.e., factors having occurrences followed by distinct letters. de Luca proved that for any word $w$ of length $|w|$ one always has
\begin{equation}\label{eq:RK}
 R_{w}+K_{w}\le|w|
\end{equation}
and studied the case in which the equality holds. This is in fact the case in which $w$ is trapezoidal, since one can prove (see Proposition \ref{prop:trap} later) that a word $w$ is trapezoidal if and only if
\begin{equation}\label{eq:trap}
 R_{w}+K_{w}=|w|.
\end{equation}
Then, de Luca proved that finite Sturmian words verify \eqref{eq:trap} and are therefore trapezoidal.

The non-Sturmian trapezoidal words were later characterized by D'Alessandro \cite{Dal02}. Since a non-Sturmian word is not balanced, it must contain a pair of factors having the same length and not verifying \eqref{eq:bal}. Such a pair is called a \emph{pathological pair}. Let $w$ be a non-Sturmian word and $(f,g)$ its pathological pair of minimal length. D'Alessandro proved that $w$ is trapezoidal if and only if $w=pq$ for some $p\in \Suff(\{\tilde{z}_{f}\}^{*})$ and $q\in \Pref(\{z_{g}\}^{*})$, where $\tilde{z}_{f}$ is the reversal of the fractional root $z_{f}$ of $f$ and $z_{g}$ is the fractional root of $g$. This characterization is quite involved and will be discussed in detail later.

The characterization by D'Alessandro allows us to give an enumerative formula for trapezoidal words. Indeed, it is known that the number of Sturmian words of length $n$ is equal to
\[S(n)=1+\sum_{i=1}^{n}(n-i+1)\phi(i)\]
(cf.~\cite{Mig91,Lip82}), where $\phi$ is the Euler totient function, i.e., $\phi(n)$ is the number of positive integers smaller than or equal to $n$ and coprime with $n$. In Section \ref{sec:formula} we prove that the number of non-Sturmian trapezoidal words of length $n$ is
\[T(n)=\sum_{i=0}^{\lfloor (n-4)/2\rfloor}2(n-2i-3)\phi(i+2),\]
and thus the number of trapezoidal words of length $n$ is $S(n)+T(n)$ (Theorem \ref{theor:enum}).


We then distinguish trapezoidal words into two distinct classes, accordingly with the definition below.

\begin{definition}\label{def:closed}
Let $w$ be a finite word over an alphabet $\Sigma$. We say that $w$ is \emph{closed} if it is empty or it has a factor (different from the word itself) occurring exactly twice in $w$, as a prefix and as a suffix, that is, with no internal occurrences. A word which is not closed is called \emph{open}.
\end{definition} 

\noindent For example, the words $aabbaa$ and $ababa$ are closed, whereas the word $aabbaaa$ is open. 

The notion of closed word is closely related to the concept of \emph{complete return} to a factor $u$ in a word $w$, as considered in \cite{GlJuWiZa09}. A complete return to $u$ in $w$ is any factor of $w$ having exactly two occurrences of $u$, one as a prefix and one as a suffix. Hence $w$ is closed if and only if it is a complete return to one of its factors; such a factor is clearly both the longest repeated prefix and the longest repeated suffix of $w$. 

The notion of closed word is also equivalent to that of \emph{periodic-like} word \cite{CaDel01a}. A word $w$ is periodic-like if its longest repeated prefix does not appear in $w$ followed by different letters.

We derive some properties of open and closed trapezoidal words in Section \ref{sec:openclosed}. We prove that the set of closed trapezoidal words is contained in that of Sturmian words (Proposition \ref{prop:cloStur}). 
We then characterize the special factors of closed trapezoidal words (Lemma \ref{lem:clo1}) by showing that every closed trapezoidal word $w$ contains a bispecial factor $u$ such that $u$ is a central word and the left (resp.\ right) special factors of $w$ are the prefixes (resp.\ suffixes) of $u$. We show that trapezoidal palindromes (hence, Sturmian palindromes) are closed words (Theorem \ref{theor:palclo}), and that a closed trapezoidal word is a Sturmian palindrome if and only if its longest repeated prefix is a palindrome (Proposition \ref{prop:palh}).

In Section \ref{sec:sesqui} we introduce \emph{semicentral words}, that are trapezoidal words in which the longest repeated prefix, the longest repeated suffix, the longest left special factor and the longest right special factor all coincide. We prove in Theorem \ref{theor:sesqui} that a trapezoidal word $w$ is semicentral if and only if it is of the form $w=uxyu$, for a central word $u$ and two different letters $x,y$.

We then study, in Section \ref{sec:fibo}, the prefixes of the Fibonacci infinite word $f$, and show that the sequence of the numbers of consecutive prefixes of $f$ that are closed (resp.~open) is the sequence of Fibonacci numbers (Theorem \ref{theor:fibo}).

The paper ends with conclusions and open problems in Section \ref{sec:conc}.

\section{Trapezoidal words}\label{sec:trap}


An \textit{alphabet}, denoted by $\Sigma$, is a finite set of symbols, called \emph{letters}. A \textit{word} over $\Sigma$ is a finite sequence of letters from $\Sigma$. The subset of the alphabet $\Sigma$ constituted by the letters appearing in $w$ is denoted by $\alph(w)$. We let $w_{i}$ denote the $i$-th letter of $w$.  Given a word $w=w_1w_2\cdots w_n$, with $w_i\in\Sigma$ for $1\leq i\leq n$, the nonnegative integer $n$ is the \emph{length} of $w$, denoted by $|w|$. The empty word has length zero and is denoted by $\varepsilon$. 
The set of all words over $\Sigma$ is denoted by $\Sigma^*$. The set of all words over $\Sigma$ having length $n$ is denoted by $\Sigma^n$. For a letter $a\in \Sigma$, $|w|_{a}$ denotes the number of $a$'s appearing in $w$.

The word $\tilde{w}=w_{n}w_{n-1}\cdots w_{1}$ is called the \emph{reversal} of $w$. A \emph{palindrome} is a word $w$ such that $\tilde{w}=w$. In particular, we assume $\tilde{\epsilon}=\epsilon$ and so the empty word is a palindrome.

A word $z$ is a \emph{factor} of a word $w$ if $w=uzv$ for some $u,v\in \Sigma^{*}$. In the special case $u = \varepsilon $ (resp.~$v = \varepsilon $), we call $z$ a prefix (resp.~a suffix) of $w$. An occurrence of a factor of a word $w$ is \emph{internal} if it is not a prefix nor a suffix of $w$. The set of prefixes, suffixes and factors of the word $w$ are denoted respectively by $\Pref(w)$, $\Suff(w)$ and $\Fact(w)$. A \emph{border} of the word $w$ is any word in $\Pref(w)\cap \Suff(w)$ different from $w$.

The \emph{factor complexity} of a word $w$ is the function defined by $f_{w}(n)=|\Fact(w)\cap \Sigma^n|$, for every $n\geq 0$. 

A factor $u$ of $w$ is \emph{left special} (resp.~\emph{right special}) in $w$ if there exist $a,b\in \Sigma$, $a\neq b$, such that $au,bu\in \Fact(w)$ (resp.~$ua,ub\in \Fact(w)$). A factor is \emph{bispecial} in $w$ if it is both left and right special. 

A \textit{period} for the word $w$ is a positive integer $p$, with $0<p\leq |w|$, such that
$w_{i}=w_{i+p}$ for every $i=1,\ldots ,|w|-p$. Since $|w|$ is always a period for $w$, we have that every non-empty word has at least one period. We can unambiguously define \textit{the} period of the word $w$ as the smallest of its periods. The period of $w$ is denoted $\pi_{w}$. The \emph{fractional root} $z_{w}$ of a word $w$ is its prefix whose length is equal to the 
period of $w$.  

A word $w$ is \emph{a power} if there exists a non-empty word $u$ and an integer $n>1$ such that $w=u^{n}$. A word which is not a power is called \emph{primitive}.

For example, let $w= aababba$. The left special factors of $w$ are $\epsilon$, $a$, $ab$, $b$ and $ba$. The right special factors of $w$ are $\epsilon$, $a$, $ab$ and $b$. Therefore, the bispecial factors of $w$ are $\epsilon$, $a$, $ab$ and $b$. The period of $w$ is $\pi_{w}=6$ and the fractional root of $w$ is $z_{w}=aababb$.

The following parameters were introduced by Aldo de Luca \cite{Del99}:

\begin{definition} Given a word $w\in \Sigma^{*}$, $H_w$ (resp.~$K_w$) denotes the minimal length of a prefix (resp.~a suffix) of $w$ which occurs only once in $w$.
\end{definition}

\begin{definition}
Given a word $w\in \Sigma^{*}$, $L_w$ (resp.~$R_w$) denote the minimal length for which there are no left special factors (resp.~no right special factors) of that length in $w$. 
\end{definition}

\begin{example}\label{ex:1}
Let $w=aaababa$. The longest left special factor of $w$ is $aba$, and it is also the longest repeated suffix of $w$; the longest right special factor of $w$ is $aa$, and it is also the longest repeated prefix of $w$. Thus, we have $L_{w}=4$, $K_{w}=4$, $R_{w}=3$ and $H_{w}=3$.
\end{example}

Notice that the values $H_{w},K_{w},L_{w}$ and $R_{w}$ are positive integers, unless for the words that are powers of a single letter, in which case $R_{w}=L_{w}=0$ and $K_{w}=H_{w}=|w|$. Moreover, one has $f_{w}(R_{w})=f_{w}(L_{w})$ and $\max\{R_{w},K_{w}\}=\max\{L_{w},H_{w}\}$ (cf.~\cite{Del99}, Corollary 4.1).

The following proposition is from de Luca (\cite{Del99}, Proposition 4.2).

\begin{proposition}\label{prop:del}
Let w be a word of length $|w|$ such that $|\alph(w)|> 1$ and set
$m_{w} = \min\{R_{w}, K_{w}\}$ and $M_{w} = \max\{R_{w}, K_{w}\}$. The factor complexity $f_{w}$ is strictly increasing
in the interval $[0,m_{w}]$, is nondecreasing in the interval $[m_{w},M_{w}]$ and strictly decreasing
in the interval $[M_{w}, |w|]$. Moreover, for $i$ in the interval $[M_{w},|w|]$, one has $f_{w}(i + 1) = f_{w}(i) - 1$. If $R_{w}<K_{w}$, then $f_{w}$ is constant in the interval $[m_{w},M_{w}]$.
\end{proposition}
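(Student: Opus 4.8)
The plan is to study the first difference $s_w(n)=f_w(n+1)-f_w(n)$ of the factor complexity and to show that its sign is governed precisely by $R_w$ and $K_w$. The starting point is that every factor of length $n+1$ is determined by, and determines, its length-$n$ prefix together with the letter following it. Writing $d^+(u)$ for the number of distinct letters $a$ with $ua\in\Fact(w)$, this gives $f_w(n+1)=\sum_{u}d^+(u)$, the sum ranging over all factors $u$ of length $n$. Subtracting $f_w(n)=\sum_u 1$ yields $s_w(n)=\sum_u(d^+(u)-1)$ for $0\le n\le|w|$.

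First I would classify the factors $u$ of length $n$ by the value of $d^+(u)$. Those with exactly one right extension contribute $0$. Those with $d^+(u)\ge 2$ are exactly the right special factors, and together they contribute the nonnegative quantity $b_w(n):=\sum_{u\text{ right special}}(d^+(u)-1)$, which is positive if and only if $w$ has a right special factor of length $n$. Finally, a factor $u$ has $d^+(u)=0$ if and only if every occurrence of $u$ ends at the last position of $w$, which forces $u$ to be the length-$n$ suffix of $w$ occurring only once; by definition of $K_w$ this happens exactly when $n\ge K_w$, and in that case it contributes $-1$. Collecting the three cases gives the master identity $s_w(n)=b_w(n)-\chi_{[n\ge K_w]}$, where $\chi_{[\cdot]}$ equals $1$ when the condition holds and $0$ otherwise.

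Next I would pin down when $b_w(n)$ vanishes. If $u$ is right special of length $n+1$, then its suffix of length $n$ is again right special, since any two distinct right extensions of $u$ are inherited by that suffix; hence the set of lengths admitting a right special factor is an initial segment of the integers, namely $\{0,1,\dots,R_w-1\}$ by minimality of $R_w$. Consequently $b_w(n)\ge 1$ for $n<R_w$ and $b_w(n)=0$ for $n\ge R_w$. The proposition then follows by reading off the sign of $s_w(n)$ on each interval: for $n<m_w$ we have $b_w(n)\ge 1$ and $\chi=0$, so $s_w(n)\ge 1$ and $f_w$ is strictly increasing on $[0,m_w]$; for $n\ge M_w$ we have $b_w(n)=0$ and $\chi=1$, so $s_w(n)=-1$, which yields both the strict decrease and the claimed unit drop on $[M_w,|w|]$. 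On the middle stretch $[m_w,M_w]$ the two cases split: if $R_w\le K_w$ then $b_w(n)=0$ and $\chi=0$ there, so $s_w\equiv 0$ and $f_w$ is constant (covering the final assertion for $R_w<K_w$), whereas if $K_w<R_w$ then $b_w(n)\ge 1$ and $\chi=1$, so $s_w(n)\ge 0$ and $f_w$ is merely nondecreasing. The hypothesis $|\alph(w)|>1$ enters only to guarantee $R_w\ge 1$, so that the increasing part is nonempty.

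I expect the one delicate point to be the derivation of the master identity, specifically the finite-word boundary bookkeeping: one must verify that the sole source of a negative contribution is the disappearing suffix, and that it disappears exactly at length $K_w$. Once this is isolated, together with the downward-closedness of the right-special lengths that ties $b_w$ to $R_w$, the remaining interval analysis is a routine inspection of signs.
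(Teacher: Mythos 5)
The paper does not prove this proposition: it is quoted verbatim from de Luca (\cite{Del99}, Proposition~4.2), so there is no internal proof to compare against. Your argument is correct and complete: the identity $f_{w}(n+1)-f_{w}(n)=b_{w}(n)-\chi_{[n\ge K_{w}]}$ is justified properly (the only factor with no right extension is the unrepeated suffix, and upward-closedness of unrepeated-suffix lengths together with downward-closedness of right-special lengths ties the two terms to $K_{w}$ and $R_{w}$ exactly), and the case analysis on $[0,m_{w}]$, $[m_{w},M_{w}]$, $[M_{w},|w|]$ then yields every clause of the statement, including the unit drop and the constancy when $R_{w}<K_{w}$. This first-difference bookkeeping is essentially the classical argument of de Luca's original paper, so nothing further is needed.
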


Proposition \ref{prop:del} allows one to give the following definition.

\begin{definition}
A non-empty word $w$ is \emph{trapezoidal} if 

\begin{itemize}
 \item $f_{w}(i)=i+1$ \ \ for \ \ $0\le i \le m_{w}$,
 \item $f_{w}(i+1)=f_{w}(i)$ \ \ for \ \  $m_{w} \le i \le M_{w}-1$,
 \item $f_{w}(i+1)=f_{w}(i)-1$ \ \ for \ \  $M_{w} \le i \le|w|$.
\end{itemize}
\end{definition}

Trapezoidal words were considered for the first time by de Luca \cite{Del99}.  
The name \emph{trapezoidal} was given by D'Alessandro \cite{Dal02}. The choice of the name is motivated by the fact that for these words the graph of the complexity function defines an isosceles trapezoid (possibly degenerated in a triangle). 

\begin{figure}
\begin{center}
\includegraphics[height=60mm]{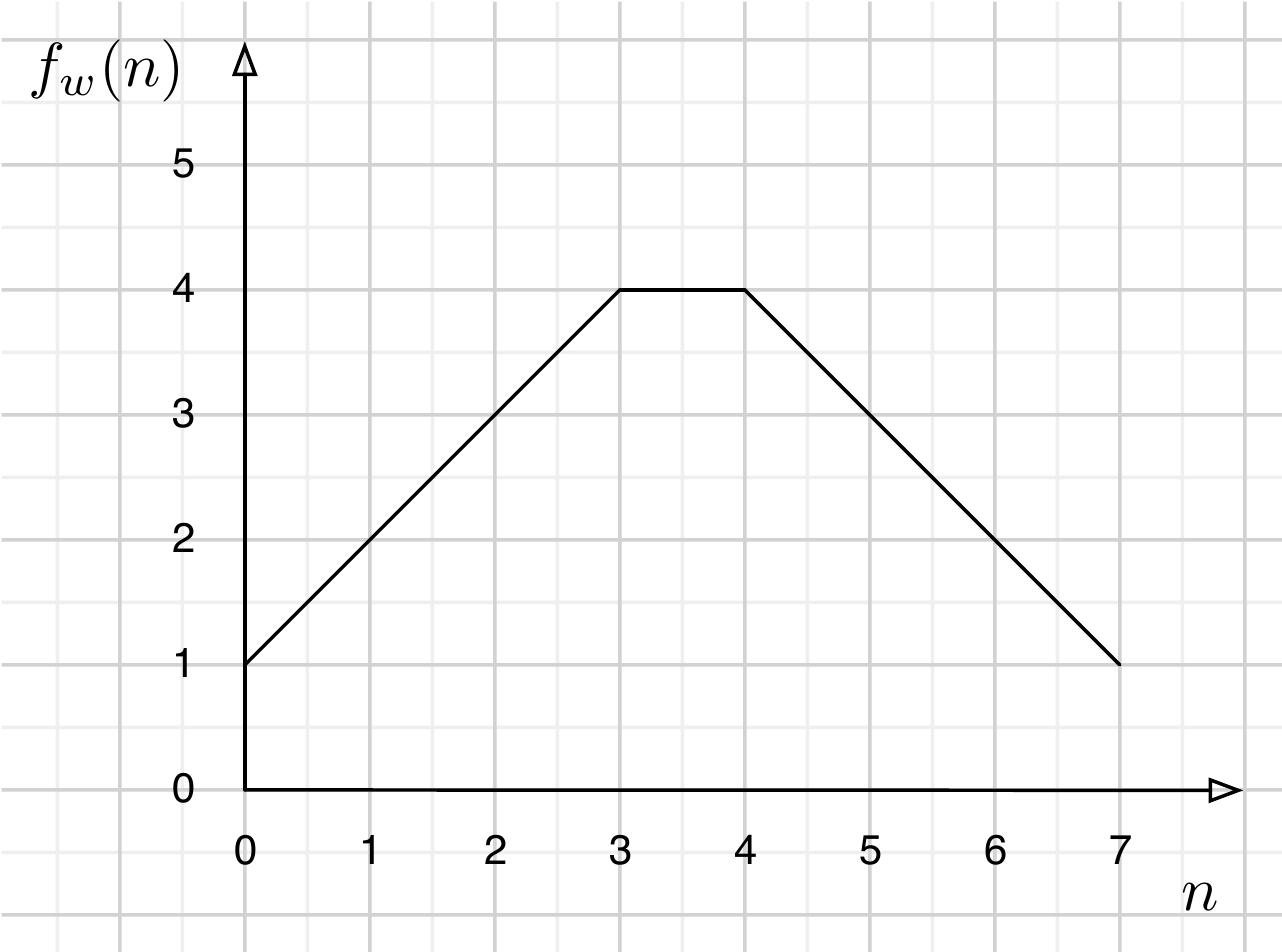}
\caption{The graph of the complexity function of the trapezoidal word $w=aaababa$. One has $\min \{R_{w},K_{w}\}=3$ and $\max \{R_{w},K_{w}\}=4$.}
\label{fig:graph}
\end{center}
\end{figure}

\begin{remark}
 It follows from the definition that a trapezoidal word is necessarily a binary word, since $f_{w}(1)$, the number of distinct letters appearing in $w$, must be $1$ or $2$. 
\end{remark}

\begin{example}
 The word $w=aaababa$ considered in Example \ref{ex:1} is trapezoidal. See \figurename~\ref{fig:graph}.
\end{example}

In the following proposition we gather some characterizations of trapezoidal words. 

\begin{proposition}\label{prop:trap}
Let $w$ be a word. The following conditions are equivalent:
 
\begin{enumerate}
\item $w$ is trapezoidal;
 \item $|w|=L_{w}+H_{w}$;
 \item $|w|=R_{w}+K_{w}$;
 \item $w$ is binary and has at most one left special factor of length $n$ for every $n\geq 0$;
 \item $w$ is binary and has at most one right special factor of length $n$ for every $n\geq 0$;
 \item $w$ has at most $n+1$ distinct factors of length $n$ for every $n\geq 0$;
 \item $|f_{w}(n+1)-f_{w}(n)|\leq 1$ for every $n\geq 0$.
\end{enumerate}
\end{proposition}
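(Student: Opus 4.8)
The plan is to reduce everything to the two de Luca parameters via Proposition~\ref{prop:del}, and to exploit the reversal symmetry. First I would dispose of the degenerate cases: a unary word satisfies all seven conditions, while each of (6) and (7) forces $|\alph(w)|\le 2$ (taking $n=1$ in (6), $n=0$ in (7)), so in every nontrivial case $w$ is binary and Proposition~\ref{prop:del} applies. Since $f_{\tilde w}=f_w$, $R_{\tilde w}=L_w$ and $K_{\tilde w}=H_w$, and since trapezoidality is a property of $f_w$ alone, conditions (2) and (4) for $w$ are exactly conditions (3) and (5) for $\tilde w$; hence it suffices to establish the equivalences among (1), (3), (5), (6), (7) and to import (2) and (4) by reversal.

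For (1)$\Leftrightarrow$(3) I would read off from Proposition~\ref{prop:del} the two endpoint values of the complexity: the unit descent on $[M_w,|w|]$ together with $f_w(|w|)=1$ gives $f_w(M_w)=|w|-M_w+1$, while strict growth on $[0,m_w]$ from $f_w(0)=1$ gives $f_w(m_w)\ge m_w+1$. As $f_w$ is nondecreasing on $[m_w,M_w]$, we get $f_w(M_w)\ge f_w(m_w)\ge m_w+1$, which is precisely \eqref{eq:RK}. Equality holds in both steps simultaneously if and only if the growth on $[0,m_w]$ is by unit increments and $f_w$ is constant on $[m_w,M_w]$ -- that is, if and only if $w$ is trapezoidal. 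Thus $|w|=R_w+K_w$ characterizes (1).

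The counting step handles (3)$\Leftrightarrow$(5) and the complexity conditions. Writing $s(n)$ for the number of right special factors of length $n$, for a binary word one has $f_w(n+1)-f_w(n)=s(n)-\delta(n)$, where $\delta(n)=1$ if the suffix of length $n$ is unrepeated and $0$ otherwise; summing over $n\ge 0$ and telescoping (using $f_w(0)=1$ and $f_w(n)=0$ for $n>|w|$) yields that the total number of right special factors equals $|w|-K_w$. Since there is at least one right special factor of each length below $R_w$ and none above, this total is at least $R_w$, with equality exactly when there is a single one of each length; hence (5) is equivalent to $|w|-K_w=R_w$, i.e. to (3). In the same vein (1) gives (7), because the first differences of a trapezoidal complexity lie in $\{-1,0,+1\}$, and (7) gives (6) by the telescoping bound $f_w(N)\le 1+N$.

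The delicate point, and the step I expect to be the main obstacle, is the converse (6)$\Rightarrow$(1) (equivalently (6)$\Rightarrow$(5)). Using the clause of Proposition~\ref{prop:del} that $R_w<K_w$ forces $f_w$ to be constant on $[m_w,M_w]$, and applying the same clause to $\tilde w$ (which settles the case $L_w<H_w$), the whole implication reduces to the single case $R_w>K_w$ and $L_w>H_w$. There $\max\{R_w,K_w\}=\max\{L_w,H_w\}$ forces $R_w=L_w=M_w$, with both the prefix and the suffix of length $M_w-1$ unrepeated; Proposition~\ref{prop:del} no longer guarantees a constant plateau, and one must argue directly that $f_w(n)\le n+1$ still precludes two distinct right special factors of the same length. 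I would attempt this by analysing the occurrences of a longest right special factor $s$ and of a longest left special factor $t$ (both of length $M_w-1$, both internal and repeated), showing that a second right special factor of some length would force enough additional factors to make $f_w(k)>k+1$ for some $k$, contradicting (6).
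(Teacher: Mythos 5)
Most of your argument is sound and in fact more self-contained than the paper's own proof, which simply cites \cite{Del99} and \cite{Dal02} for the equivalence of (1)--(5) and disposes of (6) and (7) in two sentences. Your derivation of \eqref{eq:RK} with its equality case from Proposition~\ref{prop:del} (giving $(1)\Leftrightarrow(3)$), the telescoped identity $\sum_n s(n)=|w|-K_w$ (giving $(3)\Leftrightarrow(5)$), the reversal reduction of (2) and (4) to (3) and (5), and the implications $(1)\Rightarrow(7)\Rightarrow(6)$ are all correct. But your cycle never closes: the implication $(6)\Rightarrow(1)$ (equivalently $(6)\Rightarrow(3)$ or $(6)\Rightarrow(5)$) is announced as ``the main obstacle'' and then only described as something you ``would attempt''. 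As it stands you have proved that (1)--(5) are mutually equivalent and that they imply (6) and (7), but not the converse, so the proposition is not proved.

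This gap is genuine, not cosmetic, because the missing implication is exactly the one that Proposition~\ref{prop:del} cannot deliver. Your own analysis shows why: condition (6) forces unit increments on $[0,m_w]$, and the unit decrements on $[M_w,|w|]$ are automatic, so everything reduces to showing that $f_w$ is constant on the plateau $[m_w,M_w]$, i.e.\ that $R_w+K_w=|w|$; the last clause of Proposition~\ref{prop:del} settles this only when $R_w<K_w$ (and, via $\tilde w$, when $L_w<H_w$), leaving precisely the case $K_w\le m<R_w=L_w$ in which $f_w(n)\le n+1$ is a priori compatible with a strictly increasing stretch inside the plateau (e.g.\ $f_w$ passing through $3,4,4,5$ on $[2,5]$ violates nothing in Proposition~\ref{prop:del} or in (6)). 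Ruling this out requires an actual combinatorial argument: one takes the minimal length $m$ with two right special factors, shows they must share their suffix of length $m-1$ (so they are $xc$ and $yc$ with $c$ the unique right special factor of length $m-1$), deduces that $ca$ and $cb$ are both left special and, by the symmetric minimality argument, that all four words $aca,acb,bca,bcb$ occur, and then derives a contradiction with $f_w(n)\le n+1$ by counting occurrences and extensions. None of this is routine, and it is precisely the content that the paper outsources to \cite{Del99} and \cite{Dal02}. To complete your proof you must either carry out this case or, as the authors do, invoke the equivalence $(1)\Leftrightarrow(6)$ from the literature.
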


\begin{proof}
The equivalence of $(1)$, $(2)$, $(3)$, $(4)$ and $(5)$ is in \cite{Del99} and \cite{Dal02}. The equivalence of $(5)$ and $(6)$ follows from elementary considerations on the factor complexity of binary words. Indeed, it is easy to see the number of distinct factors of length $n+1$ of a binary word $w$ is at most equal to the number of distinct factors of length $n$ plus the number of right special factors of length $n$. 
The equivalence of $(7)$ and $(1)$ comes directly from the definitions and from Proposition \ref{prop:del}. 
\end{proof}

 From now on, we fix the alphabet $\Sigma=\{a,b\}$.

Recall that a word $w\in\Sigma^{*}$ is \emph{Sturmian} if  and only if it is balanced, i.e., verifies \eqref{eq:bal}. The following proposition is from de Luca (\cite{Del99}, Proposition 7.1).

\begin{proposition}\label{prop:sturmtrap}
Every Sturmian word is trapezoidal. 
\end{proposition}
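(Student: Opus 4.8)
The plan is to reduce the statement to one of the equivalent characterizations of trapezoidality already collected in Proposition~\ref{prop:trap}, and then to establish that characterization directly from the balance condition. Since a Sturmian word is, by definition, a balanced binary word, condition~(5) of Proposition~\ref{prop:trap} is the convenient target: it suffices to show that a balanced word has \emph{at most one} right special factor of each length $n\ge 0$. (Condition~(4), phrased in terms of left special factors, would serve equally well by a symmetric argument.)

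The core of the proof is a short combinatorial lemma argued by contradiction. Suppose $w$ is balanced and admits two distinct right special factors $u\neq v$ with $|u|=|v|=n$. Let $t$ be their longest common suffix; since $u$ and $v$ have equal length and differ, I can write $u=p\,x\,t$ and $v=q\,y\,t$ with $x,y\in\Sigma$, $x\neq y$, and $|p|=|q|$. Assume without loss of generality $x=a$ and $y=b$. Because $u$ is right special, $ua$ is a factor of $w$, so its suffix $ata$ is a factor of $w$; because $v$ is right special, $vb$ is a factor of $w$, so its suffix $btb$ is a factor of $w$.

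Next I would compare the two factors $ata$ and $btb$, which have the same length $|t|+2$. One has $|ata|_a=|t|_a+2$ while $|btb|_a=|t|_a$, so $||ata|_a-|btb|_a|=2$, contradicting the balance inequality~\eqref{eq:bal}. Hence no two distinct right special factors of the same length can coexist in a balanced word, which yields condition~(5) and therefore trapezoidality. The degenerate case $w=a^{|w|}$ (and symmetrically $w=b^{|w|}$) is immediate, since such words have no right special factors at all.

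The main obstacle I anticipate lies entirely in setting up the combinatorial step cleanly: one must argue that the longest common suffix $t$ can be isolated with \emph{distinct} letters $x\neq y$ immediately preceding it in $u$ and $v$, which is precisely where the equality $|u|=|v|$ is used to rule out the possibility that one of $u,v$ is a proper suffix of the other. Once this decomposition is secured, producing the witnesses $ata$ and $btb$ and reading off the balance violation is routine.
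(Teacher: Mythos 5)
Your proof is correct. Note, however, that the paper does not actually prove this proposition: it simply cites it as Proposition~7.1 of de Luca's paper \cite{Del99}, so there is no internal argument to compare against. What you supply is a self-contained elementary proof, and it is the classical one: reduce to condition~(5) of Proposition~\ref{prop:trap} and show that a balanced binary word has at most one right special factor of each length, by extracting from two hypothetical right special factors $u=pat$, $v=qbt$ (with $t$ their longest common suffix, so the preceding letters differ) the witnesses $ata$ and $btb$, whose $a$-counts differ by $2$. This is essentially the Coven--Hedlund argument underlying Proposition~2.1.2--2.1.3 of \cite{LothaireAlg}, the same source the paper invokes later when discussing pathological pairs $(aua,bub)$. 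All the delicate points are handled: the equality $|u|=|v|$ rules out one factor being a suffix of the other, right-specialness over the binary alphabet guarantees that $ua$ and $vb$ are both factors, and the single-letter words $a^n$, $b^n$ are disposed of separately. What your approach buys is transparency and independence from \cite{Del99}; what the citation buys the authors is brevity, since the fact is standard.
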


The inclusion in Proposition \ref{prop:sturmtrap} is strict, since there exist trapezoidal words that are not Sturmian, e.g. the word $w=aaababa$ considered in Example \ref{ex:1}. 

Recall that a word $w$ is \emph{rich}~\cite{GlJuWiZa09} (or \emph{full}~\cite{BrHaNiRe04})  if it has $|w|+1$ distinct palindromic factors, that is the maximum number of distinct palindromic factors a word can contain. The following proposition is from de Luca, Glen and Zamboni  (\cite{DelGlZa08}, Proposition 2).

\begin{proposition}\label{prop:traprich}
Every trapezoidal word is rich.
\end{proposition}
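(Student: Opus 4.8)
The plan is to combine the theory of \emph{rich words} from \cite{GlJuWiZa09} with the structural constraints that trapezoidality imposes through the de Luca parameters. The starting observation is that the class of trapezoidal words is closed under taking factors: if $u\in\Fact(w)$ then every factor of $u$ is a factor of $w$, so $f_{u}(n)\le f_{w}(n)\le n+1$ and $u$ is again trapezoidal by condition $(6)$ of Proposition~\ref{prop:trap}. This lets me reason locally. Recall from \cite{GlJuWiZa09} the characterization that a word is rich if and only if every complete return to a palindromic factor is itself a palindrome. Thus it suffices to prove the following: \emph{in a trapezoidal word, every complete return to a palindrome is a palindrome.} I would argue by contradiction, fixing a palindromic factor $u=\tilde u$ of a trapezoidal word and a complete return $r$ to $u$ (so $u$ occurs in $r$ exactly twice, as a prefix and as a suffix), and assuming $r\neq\tilde r$. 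Since $r$ is a factor of a trapezoidal word, $r$ is itself trapezoidal, and I may work entirely inside $r$.

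The second step is to pin down the internal structure of $r$. Because $u$ is a prefix of $r$ that reoccurs (as a suffix), the prefix of $r$ of length $|u|$ is repeated, so $H_{r}\ge|u|+1$; on the other hand, any occurrence of the length-$(|u|+1)$ prefix $ux$ of $r$ is an occurrence of $u$, hence lies among the only two occurrences of $u$, and the suffix occurrence of $u$ cannot extend to the right, so $ux$ occurs only as a prefix and $H_{r}=|u|+1$. A symmetric argument on the suffix side (using that the prefix occurrence of $u$ cannot extend to the left) gives $K_{r}=|u|+1$. Consequently the longest repeated prefix and the longest repeated suffix of $r$ both equal $u$, and $r$ is a closed word whose border $u$ is a palindrome; in particular $u$ is a border of $r$, so $r$ admits the period $|r|-|u|$, and since $u$ has no internal occurrence this is exactly the minimal period $\pi_{r}$, with fractional root $z_{r}$ of length $|r|-|u|$.

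The final step, and the main obstacle, is to deduce $r=\tilde r$ from this structure. Here I would exploit that $\tilde r$ is also trapezoidal, has the same length and the same period as $r$, has prefix $\tilde u=u$ and suffix $u$, and that a word with period $p$ is determined by its prefix of length $p$. The plan is to compare $r$ and $\tilde r$ through their common palindromic border and to use the uniqueness of left (resp.\ right) special factors of each length guaranteed by conditions $(4)$ and $(5)$ of Proposition~\ref{prop:trap}: reflecting a special factor of $r$ across the palindrome $u$ produces a special factor of the same length, and the at-most-one constraint should force the two roots $z_{r}$ and $z_{\tilde r}$ to coincide, whence $r=\tilde r$, contradicting $r\neq\tilde r$. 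This is essentially a central-word argument, and the delicate point will be handling the interaction between the period $|r|-|u|$ and the palindromic border $u$ cleanly enough to rule out the non-palindromic case; an alternative finish, should the reflection bookkeeping prove awkward, is to bypass \cite{GlJuWiZa09} and argue directly by induction on $|w|$, using factor-closedness to assume all proper prefixes rich and reducing richness of $w$ to the unioccurrence of its longest palindromic suffix, where the same closed-word analysis applies.
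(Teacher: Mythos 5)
The paper does not actually prove this proposition: it is quoted from \cite{DelGlZa08} (Proposition 2), so your attempt at a self-contained argument is by necessity a different route. Your reduction is sound as far as it goes: trapezoidal words are factor-closed by condition $(6)$ of Proposition \ref{prop:trap}; by the characterization of Proposition \ref{prop:compret} it suffices to show that every complete return $r$ to a palindrome $u$ inside a trapezoidal word is a palindrome; and your computation that $H_{r}=K_{r}=|u|+1$, so that $r$ is a closed trapezoidal word whose longest repeated prefix and suffix both equal the palindrome $u$ and whose minimal period is $|r|-|u|$, is correct.

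The gap is the final step, which you yourself flag as ``the main obstacle'': you never actually derive $r=\tilde r$. The statement you need at that point is precisely Proposition \ref{prop:palh} of the paper (a closed trapezoidal word whose longest repeated prefix is a palindrome is itself a palindrome), but the paper proves that proposition \emph{by invoking} the richness of trapezoidal words --- the very statement you are trying to establish --- so it is unavailable to you without circularity. Your sketched mechanism (``reflecting a special factor of $r$ across the palindrome $u$ \ldots should force $z_{r}$ and $z_{\tilde r}$ to coincide'') does work cheaply when the two occurrences of $u$ overlap, i.e.\ when $|r|\le 2|u|$: then $\pi_{r}=|r|-|u|\le|u|$, and since $r$ and $\tilde r$ share the prefix $u$ and the period $\pi_{r}$, they coincide. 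But in the non-overlapping case $r=usu$ with $s$ non-empty, nothing in what you wrote rules out $s\neq\tilde s$; closing this case requires a genuine additional input --- for instance Proposition \ref{prop:cloStur} (closed trapezoidal words are Sturmian, proved via Theorem \ref{theor:dal}, hence not circular) followed by the fact that complete returns to palindromes in Sturmian words are palindromic, which is essentially the content of the result being cited. The ``alternative finish'' by induction on prefixes and unioccurrence of the longest palindromic suffix is likewise only named, not carried out. As it stands, the proposal is a correct setup resting on an unproven key lemma.
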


Again, the inclusion in Proposition \ref{prop:traprich} is strict, since there exist binary rich words that are not trapezoidal, e.g. the word $w=aabbaa$.


We now discuss the case of non-Sturmian trapezoidal words. We first recall the definition and some properties of \emph{central words} \cite{DelMi94}.

\begin{definition}
A word $w\in\Sigma^{*}$ is a central word if and only if it has two coprime periods $p$ and $q$ and length equal to $p+q-2$.
\end{definition}

\begin{proposition}[\cite{ilieplan,CarDel05}]
\label{prop:ipcent}
A word $w\in\Sigma^{*}$ is a central word if and only if it satisfies one of the following equations, for some $u,v\in\Sigma^{*}$ and $n\geq 0$:
\begin{enumerate}
\item $w=a^{n},$
\item $w=b^{n},$
\item $w=uabv=vbau.$
\end{enumerate}
Moreover, if case $3$ holds one has that $u$ and $v$ are central words.
\end{proposition}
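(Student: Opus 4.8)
The plan is to prove the characterization of central words stated in Proposition~\ref{prop:ipcent}, namely that $w$ is central if and only if $w\in\{a^n,b^n\}$ or $w=uabv=vbau$ with $u,v$ central, by establishing the two implications separately. The cleaner direction is the ``if'' part. If $w=a^n$ or $w=b^n$, then $w$ has periods $1$ and $n+1$ (these are coprime) and length $n=1+(n+1)-2$, so $w$ is central by definition. For the case $w=uabv=vbau$, I would read off two periods directly from the two factorizations: since $w=vbau$ has the proper prefix $v$ equal to a proper suffix, the integer $p=|w|-|v|=|ua|+1$ is a period, and symmetrically $q=|w|-|u|=|bv|+1$ is a period; one then checks $p+q = |w|+2$, i.e.\ $|w|=p+q-2$, and argues that $p,q$ can be taken coprime.

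The harder direction is ``only if'': assuming $w$ has two coprime periods $p,q$ with $|w|=p+q-2$, I must show $w$ falls into one of the three forms. I expect this to be the main obstacle, and the natural tool is the Fine--Wilf theorem, which states that a word with periods $p$ and $q$ and length at least $p+q-\gcd(p,q)$ has period $\gcd(p,q)$. Here $\gcd(p,q)=1$ and $|w|=p+q-2=(p+q-1)-1$, so $w$ is \emph{exactly one letter too short} to force period $1$; this boundary case is precisely what makes central words interesting, and it is where the combinatorial content lies. I would first dispose of the degenerate situations where one period is trivial (e.g.\ $p=1$, forcing $w=a^n$ or $b^n$) and where $w$ is unary, landing in cases~(1)--(2).

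For the remaining case, where $w$ is binary with two nontrivial coprime periods summing to $|w|+2$, I would argue that $w$ must be a palindrome and extract the factorization $w=uabv=vbau$ of case~(3). The idea is that $p=|ua|+1$ and $q=|bv|+1$ being periods forces, via comparison of the overlapping regions, the two ``mismatch'' letters to be distinct (giving the $ab$ versus $ba$), while the Fine--Wilf boundary guarantees the remaining coordinates agree; a symmetry or palindromicity argument then identifies the two factorizations. Finally, to obtain the ``moreover'' clause that $u$ and $v$ are themselves central, I would show that $u$ and $v$ inherit pairs of coprime periods of the correct total length, so that central words admit this recursive structure. Since this proposition is attributed to \cite{ilieplan,CarDel05}, I would aim to present the argument at the level of citing Fine--Wilf and verifying the period bookkeeping, rather than reproving every step of the classical theory of central words.
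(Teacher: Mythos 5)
The paper does not actually prove this proposition: it is stated as a quotation from \cite{ilieplan,CarDel05}, so there is no internal argument to compare yours against. Judged on its own terms, your outline follows the standard route of those sources --- treat $|w|=p+q-2$ as the critical Fine--Wilf length, dispose of the unary cases, and analyse the boundary case --- so the strategy is the right one.

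There are, however, two places where what you have written is not yet a proof. First, in the ``if'' direction for case (3): the two borders do give periods $p=|u|+2$ and $q=|v|+2$ with $p+q=|w|+2$, but their coprimality is not something one can ``take''; it is a lemma about these specific periods and needs an argument. The argument is: if $d=\gcd(p,q)\geq 2$, then $|w|=p+q-2\geq p+q-d$, so Fine--Wilf forces $w$ to have period $d$; but $w_{p-1}=a$ (read off from $w=uabv$) and $w_{q-1}=b$ (read off from $w=vbau$) sit at positions congruent modulo $d$, and $p\neq q$ since $u=v$ would give $ab=ba$ --- a contradiction. Second, the entire ``only if'' direction and the ``moreover'' clause are described rather than carried out (``argue that $w$ must be a palindrome and extract the factorization'', ``show that $u$ and $v$ inherit coprime periods''); locating the mismatch positions and verifying that the two offending letters are distinct is precisely where the combinatorial content of \cite{DelMi94,CarDel05} lives, and nothing in your sketch yet does that work. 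Since the paper itself merely cites the result, deferring to the literature is a legitimate choice; but if a proof is to be supplied, these are the steps that must actually be written. A minor convention point: under the paper's definition a period satisfies $p\leq|w|$, so your claim that $a^n$ has period $n+1$ (and indeed the definition of central word itself, already for $w=\varepsilon$) tacitly uses the extended convention in which $|w|+1$ counts as a period; this should be made explicit.
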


It is known that if $w=uabv=vbau$ is a central word, then the longest between $u$ and $v$ has not internal occurrences in $w$. Thus, $w$ is a closed word. Note that this is also a consequence of our Theorem \ref{theor:palclo}.

\begin{proposition}[\cite{Del97}]\label{prop:newcent}
 A word $w\in\Sigma^{*}$ is a central word if and only if $w$ is a palindrome and $awa$ and $bwb$ are both Sturmian words.
\end{proposition}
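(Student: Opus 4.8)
The plan is to establish the two implications separately, and within each to treat the palindrome condition and the Sturmian (balance) condition as distinct ingredients. For the forward implication I would first prove that a central word is a palindrome, by induction based on Proposition~\ref{prop:ipcent}: the base cases $w=a^{n}$ and $w=b^{n}$ are palindromes, and if $w=uabv=vbau$ with $u,v$ central and strictly shorter, then by the inductive hypothesis $\tilde u=u$ and $\tilde v=v$, so that $\tilde w=\tilde v\,ba\,\tilde u=vbau=w$. Once $w$ is known to be a palindrome, both $awa$ and $bwb$ are palindromes as well, and since balance is invariant under reversal this symmetry can be exploited throughout.

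For the balance of $awa$ and $bwb$ in the forward direction I would invoke the classical criterion that a binary word fails to be balanced precisely when it contains $xvx'$-type obstructions, i.e.\ when there is a palindrome $v$ such that both $ava$ and $bvb$ occur as factors. Arguing by contradiction, suppose $awa$ is not balanced and fix such a palindrome $v$. Central words are balanced (they are the palindromic prefixes of standard Sturmian words), so $ava$ and $bvb$ cannot both lie inside $w$; since the only letters added are the two boundary $a$'s, the factor $bvb$ must sit entirely within $w$, forcing the occurrence of $ava$ to use a boundary $a$. Then $va\in\Pref(w)$ (equivalently, by the palindrome property, $av\in\Suff(w)$), while $bvb\in\Fact(w)$, which makes $v$ a left and right special factor of $w$. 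The point is that a central word with coprime periods $p,q$ and $|w|=p+q-2$ admits bispecial factors only up to a length controlled by $\min\{p,q\}$, and Fine--Wilf's theorem shows that a bispecial factor this long would force a shorter common period, contradicting $|w|=p+q-2$. The argument for $bwb$ is symmetric.

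The reverse implication is where the main difficulty lies. Given a palindrome $w$ (not a power of a single letter, the letter-power case being immediate) such that both $awa$ and $bwb$ are Sturmian, I must produce two coprime periods $p,q$ with $|w|=p+q-2$. The natural candidate for $p$ is the minimal period of $w$, equivalently $|w|$ minus the length of its longest proper border (itself a palindrome, by symmetry); a second period $q$ should be extracted from the balance of $awa$ and $bwb$. The hard part is twofold: showing that these two periods are \emph{coprime}, and showing that their sum \emph{saturates} the relation $|w|=p+q-2$, which is exactly the extremal Fine--Wilf configuration one step below forcing the period $\gcd(p,q)$. I expect this to be the central obstacle, and the cleanest route is to use the contrapositive of the unbalanced-pair criterion: if $w$ had a shorter common period (so that $|w|\ge p+q-1$) or incompatible periods, one could exhibit a palindrome $v$ yielding $ava,bvb$ inside $awa$ or $bwb$, contradicting their balance.

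An alternative, and possibly more economical, organization of the reverse direction is to reconstruct the factorization of Proposition~\ref{prop:ipcent} directly: show that a palindrome $w$ with both letters and with $awa,bwb$ balanced can be written as $w=uabv=vbau$, and that the balance hypotheses descend to $u$ and $v$ (so that $aua,bub$ and $ava,bvb$ are balanced and $u,v$ are palindromes). One could then close the induction by invoking Proposition~\ref{prop:ipcent} to conclude that $u,v$, and hence $w$, are central. In either formulation, the delicate step is the same: proving that the balance of both $awa$ and $bwb$ is exactly the obstruction preventing $w$ from inheriting a shorter period, which is what pins down the length relation $|w|=p+q-2$.
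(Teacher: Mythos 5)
The paper does not prove this proposition: it is imported verbatim from \cite{Del97} with only the citation, so there is no in-paper argument to compare yours against. Judged on its own terms, your proposal is an outline containing a genuine gap, concentrated in the reverse implication. You correctly identify what must be produced --- two coprime periods $p,q$ of $w$ with $|w|=p+q-2$ --- and you correctly identify that the whole difficulty is to show that the balance of $awa$ and of $bwb$ forces both the coprimality and the saturation of the length bound. But you then explicitly label this ``the central obstacle'' and leave it unresolved in both of the organizations you sketch: in the first, the step ``if $w$ had a shorter common period or incompatible periods, one could exhibit a palindrome $v$ yielding $ava,bvb$ inside $awa$ or $bwb$'' is precisely the statement to be proved, not a reduction of it; in the second, the claim that the balance hypotheses ``descend to $u$ and $v$'' in a factorization $w=uabv=vbau$ presupposes that such a factorization exists for a non-letter-power palindrome satisfying the hypotheses, which is again the conclusion in disguise. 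As written this direction is a plan, not a proof.

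The forward direction is closer to complete but still has an unjustified step. The induction for palindromicity via Proposition~\ref{prop:ipcent} is fine, and the reduction of ``$awa$ unbalanced'' to a palindrome $v$ with $bvb\in\Fact(w)$, $va\in\Pref(w)$ and (by palindromicity) $av\in\Suff(w)$ is correct. However, the concluding appeal to ``bispecial factors only up to a length controlled by $\min\{p,q\}$'' plus Fine--Wilf does not work as stated: central words do possess bispecial palindromic factors of various lengths (for instance $a$ is bispecial in the central word $aabaa$), so no contradiction follows from $v$ being bispecial per se. The contradiction must be extracted from the specific configuration in which the prefix occurrence of $v$ is followed by $a$ while an internal occurrence is surrounded by $b$'s, interacting with the periods $p$ and $q$ of $w$; that computation is the actual content of the step and is missing. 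If you want a proof compatible with the paper's own toolkit, the most economical route for the reverse direction is to show that the hypotheses make $w$ a bispecial factor of a single Sturmian word and then invoke Proposition~\ref{prop:centrbis}; but the gluing of $aw$, $bw$, $wa$, $wb$ into one balanced word is, once more, the same nontrivial step that your proposal defers.
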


\begin{proposition}[\cite{DelMi94}]\label{prop:centrbis}
  A word $w\in\Sigma^{*}$ is a central word if and only if there exists a Sturmian word $w'$ such that $w$ is a bispecial factor of $w'$.
\end{proposition}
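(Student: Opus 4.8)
My plan is to route both implications through Proposition~\ref{prop:newcent}, which characterises central words as exactly those palindromes $w$ for which $awa$ and $bwb$ are Sturmian. Since the alphabet is binary, saying that $w$ is a bispecial factor of a word $w'$ is the same as saying that all four words $aw,bw,wa,wb$ occur in $w'$. So I would try to show that, over Sturmian (balanced) words, this occurrence condition is equivalent to ``$w$ is a palindrome and $awa,bwb$ are balanced'', and then quote Proposition~\ref{prop:newcent} to finish.

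For the implication from bispecial to central, assume $w$ is bispecial in a Sturmian word $w'$, so that $aw,bw,wa,wb\in\Fact(w')$. First I would check that $awa$ is balanced: every factor of $awa$ of length at most $|w|+1$ is a factor of $aw$ or of $wa$ (only the full word $awa$ meets both of its extreme letters, and it has no companion of its own length), hence a factor of the balanced word $w'$, so any two such factors of equal length have $a$-counts differing by at most one; the same reasoning gives $bwb$ (indeed also $awb$ and $bwa$) balanced. It then remains to prove that $w$ is a palindrome, and for this I would embed $w'$ as a factor of an aperiodic infinite Sturmian word $s$ (every balanced word is a factor of one), so that $w$ is bispecial in $s$ as well. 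The language of $s$ is closed under reversal, whence its right special factors are exactly the reversals of its left special ones, and $s$ has exactly one left special factor of each length. Since $w$ is left special and $\tilde w$ is consequently left special of the same length, uniqueness forces $w=\tilde w$. Proposition~\ref{prop:newcent} then yields that $w$ is central.

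For the converse, assume $w$ is central; by Proposition~\ref{prop:newcent} it is a palindrome and $awa,bwb$ are Sturmian. I would exhibit the explicit witness $w'=awbwa$. Its bispeciality is immediate and uses no hypothesis at all: the first displayed copy of $w$ is preceded by $a$ and followed by $b$, contributing $aw$ and $wb$, while the second copy is preceded by $b$ and followed by $a$, contributing $bw$ and $wa$; hence $w$ is both left and right special in $w'$. The entire content of this direction is therefore to prove that $awbwa$ is balanced. Here I would use that $awb$ and $bwa=\widetilde{awb}$ are the lower and upper Christoffel words associated with $w$, control the maximal $a$-runs of $awbwa$ through the run structure of the central word $w$, and feed in the balance of $awa$ and $bwb$ already granted by Proposition~\ref{prop:newcent}; equivalently, one can observe that the periodic word ${}^{\infty}(awb)^{\infty}$ is balanced with a reversal-closed set of factors, so that $awb$ supplies the context $(a,b)$ and its reversal $bwa$ the context $(b,a)$, and then take a sufficiently long finite factor as $w'$.

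The step I expect to be the real obstacle is proving that the constructed word is balanced: bispeciality is essentially free from the construction, whereas balance is exactly where the arithmetic of the two coprime periods of the central word (equivalently, the identification of $awb$ with a Christoffel word) has to be exploited. A second, more delicate point is the palindrome deduction in the first direction, which does not follow from properties of the finite word $w'$ alone but genuinely relies on the reversal symmetry of an infinite Sturmian language; passing to such an infinite word, or otherwise establishing that a bispecial factor of a Sturmian word must be a palindrome, is the crux there.
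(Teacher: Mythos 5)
The paper itself offers no proof of this proposition --- it is imported verbatim from de~Luca and Mignosi \cite{DelMi94} --- so there is no in-paper argument to compare you against; your attempt has to stand on its own. Your forward direction does: the observation that every proper factor of $awa$ is a factor of $aw$ or of $wa$, both of which lie in the balanced word $w'$, correctly yields that $awa$ and $bwb$ are balanced (the full word $awa$ being the unique factor of its own length, there is nothing to check there); and the palindromicity of $w$ is correctly obtained by embedding $w'$ in an infinite Sturmian word (the fact quoted in Section~\ref{sec:intro} from \cite{LothaireAlg}), where reversal-closure of the language plus uniqueness of the left special factor of each length forces $w=\tilde w$. Combined with Proposition~\ref{prop:newcent}, that direction is complete.

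The converse is where the genuine gap lies, and you have located it yourself without filling it: bispeciality of $w$ in $awbwa$ (or in a long factor of ${}^{\infty}(awb)^{\infty}$) is indeed free, but the balance of the witness is the entire content of this direction and is only gestured at. Of your two proposed completions, the first (controlling maximal $a$-runs) cannot suffice as stated, since balance is a constraint at every factor length, not just on letter runs; the second (take a long factor of ${}^{\infty}(awb)^{\infty}$, in which $w$ is bispecial because both $awb$ and its conjugate $bwa$ occur) is the right move, but it rests entirely on the unproved claim that this periodic word is balanced, i.e.\ that $awb$ is a Christoffel word whenever $w$ is central. That claim is essentially the de~Luca--Mignosi theory being cited in the statement itself, so as written the argument reduces the proposition to (a form of) itself. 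None of the tools stated in the paper (Propositions~\ref{prop:ipcent}--\ref{prop:newcent}, or the fractional-root criterion quoted from \cite{DelDel06}) hands this to you directly. To close the argument you must either prove that ${}^{\infty}(awb)^{\infty}$ is balanced (for instance by induction along the standard-word construction, or by showing via the $xux$/$\bar{x}u\bar{x}$ criterion for unbalance that $awbwa$ contains no pathological pair), or cite that fact explicitly as a separate known result rather than leaving it as an acknowledged ``obstacle.''
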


Recall that a word $w$ is not balanced if and only if it contains a \emph{pathological pair} $(f,g)$, that is, $f$ and $g$ are factors of $w$ of the same length but they do not verify \eqref{eq:bal}. Moreover, if $f$ and $g$ are chosen of minimal length, there exists a palindrome $u$ such that $f=aua$ and $g=bub$, for two different letters $a$ and $b$ (see \cite[Proposition 2.1.3]{LothaireAlg} or \cite[Lemma 3.06]{CoHe73}). 

We can also state that $f$ and $g$ are Sturmian words, since otherwise they would contain a  pathological pair shorter than $|f|=|g|$ and hence $w$ would contain such a pathological pair, against the minimality of $f$ and $g$. So the word $u$ is a palindrome such that $aua$ and $bub$ are Sturmian words, that is, by Proposition \ref{prop:newcent}, $u$ is a central word.

Moreover, we have the following

\begin{lemma}
 Let $w$ be a binary non-Sturmian word and $(f,g)$ a pathological pair of $w$ of minimal length. Then there exists a unique central word $u$ such that $f=aua$ and $g=bub$.
\end{lemma}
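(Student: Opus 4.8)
The plan is to prove existence of the central word $u$ separately from uniqueness, leaning on the discussion that immediately precedes the lemma. Existence is essentially already established in the paragraph before the statement: since $(f,g)$ is a pathological pair of minimal length, the cited results (\cite[Proposition 2.1.3]{LothaireAlg}, \cite[Lemma 3.06]{CoHe73}) guarantee a palindrome $u$ with $f=aua$ and $g=bub$ for distinct letters $a,b$; minimality then forces $f$ and $g$ to be Sturmian, whence $aua$ and $bub$ are both Sturmian, and Proposition~\ref{prop:newcent} upgrades $u$ from a mere palindrome to a central word. So the first part of the proof would just assemble these observations into a single statement.

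The real content is uniqueness, which is what the lemma adds over the preceding remarks. Here I would argue as follows. Suppose $u$ and $u'$ are two central words, both realizing the pathological pair, so that $\{aua, bub\}=\{f,g\}=\{au'a, bu'b\}$ (up to which letter plays the role of $a$). Comparing lengths gives $|u|=|u'|$ immediately, since $|f|=|g|$ is fixed. Then I would match $f$ and $g$ against each candidate decomposition: $f=aua=au'a$ (or $=bu'b$, but the first and last letters must agree, so the leading letter of $f$ pins down which of $u,u'$ sits inside $f$). Once the outer letters are matched, equality of the inner factors $u=u'$ follows by cancelling the common first and last letters. The only subtlety is ruling out the possibility that the two palindromes come from swapping the roles of $a$ and $b$; but $f$ and $g$ are distinct words (they differ precisely because they violate balance), and their first letters are determined, so the assignment of $u$ to $f$ versus $g$ is forced.

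The main obstacle, and the step I would treat most carefully, is justifying that the decomposition $f=aua$ with $u$ a palindrome is itself unique for a fixed $f$ — i.e. that a Sturmian word of the form $aua$ with $u$ a palindrome cannot also be written $a u' a$ with a different palindrome $u'$ of the same length. Since the outer letters and the length are fixed, $u$ and $u'$ are literally the same factor of $f$ (the factor obtained by deleting the first and last letters), so this reduces to the trivial observation that a word has a unique middle. Thus the apparent subtlety dissolves once one notes that $u$ is \emph{determined} as $f$ with its first and last letters removed; the palindromicity and centrality are properties it inherits, not choices. I would phrase the uniqueness argument around this point: $u=f_2f_3\cdots f_{|f|-1}$ is forced, and Proposition~\ref{prop:newcent} confirms it is central, giving both existence and uniqueness in one stroke.
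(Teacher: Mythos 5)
Your existence argument is fine and matches the paper's setup: the palindrome $u$ with $f=aua$, $g=bub$ comes from the cited minimality results, and Proposition~\ref{prop:newcent} upgrades it to a central word (indeed, the paper treats existence as already established in the discussion preceding the lemma and devotes its proof entirely to uniqueness). But your uniqueness argument proves the wrong statement. You fix the pair $(f,g)$ and observe that $u$ is forced to be $f$ with its outer letters stripped; that is indeed trivial, and it is not the content of the lemma. The uniqueness the paper needs (and proves) is across \emph{all} pathological pairs of minimal length: there cannot exist two distinct words $u\neq u'$ of the same minimal length such that $aua$, $bub$, $au'a$ and $bu'b$ are all factors of $w$. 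This is precisely what makes the ``central root'' of Proposition~\ref{prop:centralroot} well defined, and it does not follow from anything you wrote --- a priori $w$ could contain two unrelated minimal-length pathological pairs, and your reduction to ``a word has a unique middle'' never engages with that possibility.

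The paper's argument for the real uniqueness claim is short but has genuine content: suppose $u\neq u'$ with $|u|=|u'|$ both work, and let $h$ be the longest common prefix of $u$ and $u'$, so that $hx\in\Pref(u)$ and $hy\in\Pref(u')$ for distinct letters $x,y$. From the four factors $aua$, $bub$, $au'a$, $bu'b$ one reads off that $w$ contains $ahx$, $bhx$, $ahy$, $bhy$, i.e.\ all of $aha$, $ahb$, $bha$, $bhb$; then $(aha,bhb)$ is a pathological pair strictly shorter than $(f,g)$, contradicting minimality. You should replace your ``unique middle'' step with an argument of this kind.
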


\begin{proof}
 Suppose by contradiction that there exists a word $u'\neq u$, with $|u|=|u'|$, such that $w$ contains as factors $aua$, $bub$, $au'a$ and $bu'b$. Let $h$ be the longest common prefix between $u$ and $u'$. Since $u\neq u'$ there exist distinct letters $x,y\in \Sigma$ such that $hx$ is a prefix of $u$ and $hy$ is a prefix of $u'$. Then $w$ would contain as factors $aha$, $ahb$, $bha$ and $bhb$. Therefore, $w$ would contain a pathological pair $(aha,bhb)$ shorter than $(f,g)$, a contradiction. 
\end{proof}

The following lemma is attributed to Aldo de Luca in \cite{Dal02}. We provide a proof for the sake of completeness.

\begin{lemma}\label{lem:separation}
 Let $w$ be a non-Sturmian word and $(f,g)$ the pathological pair of $w$ of minimal length. Then $f$ and $g$ do not overlap in $w$.
\end{lemma}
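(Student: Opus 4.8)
The plan is to argue by contradiction, exploiting the palindromic structure of the central word $u$ which, by the discussion preceding the lemma, satisfies $f=aua$ and $g=bub$ for the two distinct letters $a,b$. I would write $\ell=|f|=|g|=|u|+2$ and index the letters so that $f_1=f_\ell=a$, $g_1=g_\ell=b$, while $f_k=g_k=u_{k-1}$ for $2\le k\le \ell-1$. Suppose some occurrence of $f$ and some occurrence of $g$ overlapped in $w$. Since $|f|=|g|$ and $f\neq g$, the two occurrences start at different positions, so I would let $d\ge 1$ be the distance between their starting positions and assume, without loss of generality, that the occurrence of $f$ starts first (the opposite case follows by exchanging the roles of $a$ and $b$, which is legitimate precisely because the same palindrome $u$ sits inside both $f$ and $g$). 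Overlapping then means $1\le d\le \ell-1$, and the letters in the shared region must agree, yielding the equations $f_{d+m}=g_m$ for $1\le m\le \ell-d$.

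Next I would read off information from the two ends of the overlap. The equation at $m=1$ gives $f_{d+1}=g_1=b$; the extreme shift $d=\ell-1$ is impossible since it would force $f_\ell=a$ to equal $b$, so in fact $1\le d\le \ell-2$ and hence $f_{d+1}=u_d$, giving $u_d=b$. Symmetrically, the equation at $m=\ell-d$ gives $f_\ell=a=g_{\ell-d}$, and the same range of $d$ makes $g_{\ell-d}=u_{\ell-d-1}$, giving $u_{\ell-d-1}=a$. Thus the overlap would pin the letter of $u$ in position $d$ to $b$ and the letter in position $\ell-d-1$ to $a$.

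The contradiction then comes from the palindromicity of $u$, which holds because $u$ is central (Proposition \ref{prop:newcent}): since $|u|=\ell-2$, the positions $d$ and $\ell-1-d$ are mirror images of each other, so $u_d=u_{\ell-1-d}=u_{\ell-d-1}$, forcing $b=a$. Hence no such overlap can exist. I expect the only delicate points to be the bookkeeping of indices and the explicit handling of the extreme shift $d=\ell-1$ (together with the degenerate case $u=\varepsilon$, where the admissible range $[1,\ell-2]$ is empty and the single possible shift is exactly the excluded one); the conceptual core—that an overlap would force two palindromically symmetric positions of $u$ to carry opposite letters—uses only that $u$ is a palindrome, so neither minimality nor the full strength of centrality is needed beyond the structural form $f=aua$, $g=bub$.
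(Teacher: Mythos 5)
Your proof is correct, but it follows a genuinely different route from the paper's. The paper argues in one line from minimality alone: if the two occurrences overlap, write the factor of $w$ they span as $pvz$ with $f=pv$, $g=vz$ and $|v|>0$; the common block $v$ cancels in the letter counts, so $\bigl||p|_{x}-|z|_{x}\bigr|=\bigl||f|_{x}-|g|_{x}\bigr|\ge 2$ and $(p,z)$ is a strictly shorter pathological pair, contradicting minimality. Your argument instead leans on the structural fact (quoted just before the lemma from Coven--Hedlund/Lothaire) that the minimal pair has the form $f=aua$, $g=bub$ with $u$ a palindrome, and then derives the contradiction by comparing the two endpoints of the overlap and invoking the mirror symmetry $u_{d}=u_{\ell-1-d}$; your index bookkeeping, the exclusion of the extreme shift $d=\ell-1$, the degenerate case $u=\varepsilon$, and the $a\leftrightarrow b$ symmetry justifying the ``$f$ starts first'' reduction are all handled correctly. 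The trade-off: the paper's cancellation argument is shorter, self-contained, and uses nothing but minimality of the pair; yours requires the heavier prerequisite of the $aua/bub$ palindromic structure, but in exchange it proves something slightly stronger --- no occurrence of $aua$ can ever overlap an occurrence of $bub$ when $u$ is a palindrome and $a\neq b$, in any word, with no appeal to minimality at that stage. Either proof is acceptable here.
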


\begin{proof}
By contradiction, suppose that $w$ contains a factor $uvz$ such that $f=uv$  and $g=vz$, with $|v|>0$. Since $||f|_{a}-|g|_{a}|=||u|_{a}-|z|_{a}|$, then $(u,z)$ would be a pathological pair of $w$ shorter than $(f,g)$. 
\end{proof}

We then have:

\begin{proposition}\label{prop:centralroot}
  Let $w$ be a binary non-Sturmian word. Then there exists a unique central word of minimal length $u$ such that $w=v_{1}auav_{2}bubv_{3}$, for $a,b\in \Sigma$ distinct letters and $v_{1},v_{2},v_{3}\in \Sigma^{*}$. We call the word $u$ the \emph{central root} of $w$.
\end{proposition}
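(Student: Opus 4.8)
The plan is to obtain both existence and uniqueness by identifying the central root with the central word attached to a pathological pair of minimal length, using the two lemmas just proved.

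For existence, I would start from a pathological pair $(f,g)$ of $w$ of minimal length. By the preceding discussion together with the lemma characterizing such pairs, there are a central word $u$ and distinct letters $a,b$ with $f=aua$ and $g=bub$. By Lemma \ref{lem:separation}, applied to $(f,g)$ and, by symmetry, to the equally minimal pathological pair $(g,f)$, no occurrence of $f$ overlaps an occurrence of $g$ in $w$; since moreover $|f|=|g|$ and $f\neq g$ (their counts of $a$ differ), neither can be a factor of the other. Hence any chosen occurrence of $f$ and any chosen occurrence of $g$ are disjoint, so one lies entirely to the left of the other, giving $w=v_{1}fv_{2}gv_{3}$ or $w=v_{1}gv_{2}fv_{3}$ for suitable $v_{1},v_{2},v_{3}\in\Sigma^{*}$. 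In the first case we already have $w=v_{1}\,aua\,v_{2}\,bub\,v_{3}$; in the second, renaming the two letters (i.e.\ taking the new first letter to be $b$ and the new second letter to be $a$) puts $w$ into the required form with the \emph{same} central word $u$ sitting between both pairs of flanking letters. We then let $u$ be one of minimal length among all central words admitting such a decomposition.

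For uniqueness, the key observation is that any central word $u'$ admitting a decomposition $w=v_{1}'\,a'u'a'\,v_{2}'\,b'u'b'\,v_{3}'$ itself produces a pathological pair: the factors $a'u'a'$ and $b'u'b'$ have equal length $|u'|+2$, while $|a'u'a'|_{a'}-|b'u'b'|_{a'}=2$, which violates \eqref{eq:bal}. Consequently the length of every such central word equals the length of its associated pathological pair minus $2$, so minimizing $|u'|$ is exactly the same as minimizing the length of a pathological pair of $w$. Therefore the central root, being of minimal length, must arise from a pathological pair $(f,g)$ of minimal length through $f=aua$ and $g=bub$; and the lemma on minimal pathological pairs guarantees that the central word $u$ obtained in this way is unique. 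This gives uniqueness of the central root.

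The step requiring the most care is precisely this equivalence between ``central word of minimal length admitting the decomposition'' and ``central word coming from a pathological pair of minimal length'': one direction rests on the short count above showing that any decomposition forces a pathological pair of length $|u'|+2$, while the other relies on Lemma \ref{lem:separation} to convert a minimal pathological pair into a genuine factorization $w=v_{1}fv_{2}gv_{3}$. The relabeling in the existence argument is routine once one notes that the same string $u$ flanks both letters, and so it should simply be stated explicitly rather than belabored.
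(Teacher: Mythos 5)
Your argument is correct and follows exactly the route the paper intends: the paper states this proposition without an explicit proof, presenting it as an immediate consequence of the preceding discussion (the minimal pathological pair has the form $(aua,bub)$ with $u$ central), the unnamed uniqueness lemma, and Lemma \ref{lem:separation}. Your write-up simply makes explicit the two points the paper leaves implicit — that non-overlapping occurrences yield the factorization $w=v_{1}fv_{2}gv_{3}$ (up to swapping the roles of the letters), and that any central word admitting such a decomposition produces a pathological pair of length $|u'|+2$, so minimality of $|u'|$ matches minimality of the pathological pair — and both points are handled correctly.
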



The following is the characterization of trapezoidal non-Sturmian words given by D'Alessandro \cite{Dal02}.

\begin{theorem}\label{theor:dal}
 Let $w$ be a binary non-Sturmian word. Then w is trapezoidal if and only if
\begin{equation}\label{eq:dal}
 w = pq, \quad\text{ with }p \in \Suff(\{\tilde{z}_{f}\}^{*}),\quad q\in \Pref(\{z_{g}\}^{*})
\end{equation}
where $(f,g)$ is the pathological pair of $w$ of minimal length. In particular, if $w$ is trapezoidal, then $K_{w}=|q|$ and the longest right special factor of $w$ is the prefix of $w$ of length $R_{w}-1$, that is, the longest proper prefix of $p$.
\end{theorem}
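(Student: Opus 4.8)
The goal is to prove D'Alessandro's characterization (Theorem~\ref{theor:dal}): a binary non-Sturmian word $w$ is trapezoidal if and only if $w=pq$ with $p\in\Suff(\{\tilde z_f\}^*)$ and $q\in\Pref(\{z_g\}^*)$, where $(f,g)$ is the minimal-length pathological pair. By Proposition~\ref{prop:trap}, trapezoidality is equivalent to $w$ having at most one right special factor of each length. The plan is to analyze the right special factors of $w$ by exploiting the rigid structure imposed by the pathological pair, via the decomposition from Proposition~\ref{prop:centralroot} and the non-overlapping property from Lemma~\ref{lem:separation}.

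First I would set up notation: write $f=aua$ and $g=bub$ for the central root $u$, and recall that the fractional roots $z_f,z_g$ have lengths equal to the periods $\pi_f,\pi_g$. By Lemma~\ref{lem:separation} the occurrences of $f$ and $g$ in $w$ do not overlap, so there is a well-defined "boundary" region separating the $f$-side from the $g$-side. The key observation driving the forward direction is that the pair $(f,g)=(aua,bub)$ forces $u$ to be a bispecial factor of $w$ (both $aua$, $bub$ occur, so $u$ is left special via $a,b$ and right special via $a,b$), and it is the \emph{longest} bispecial factor, lying at the apex of the trapezoid. I would argue that if $w$ is trapezoidal, the unique right special factor chain is totally ordered by the prefix relation, and that the periodicity of each side is what guarantees this uniqueness: on the left of the boundary, right special factors must be governed by $\tilde z_f$, and on the right by $z_g$.

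For the forward implication (trapezoidal $\Rightarrow$ the form \eqref{eq:dal}), I would argue as follows. Since $w$ is trapezoidal, by Proposition~\ref{prop:trap}(5) it has at most one right special factor per length, and by Proposition~\ref{prop:del} the longest right special factor has length $R_w-1$. The chain of right special factors $\epsilon = s_0 \prec s_1 \prec \cdots$ is forced to be a prefix chain, and $u$ appears in this chain since $u$ is right special. Using that $f=aua$ occurs and is maximal in its balancedness behavior, I would show the right-special structure to the left of $u$ is periodic with period $\pi_f$, extracting that the prefix $p$ of $w$ ending at the boundary is a suffix of a power of $\tilde z_f$; symmetrically, the periodicity on the $g$-side with period $\pi_g$ forces the suffix $q$ to be a prefix of a power of $z_g$. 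The non-overlap from Lemma~\ref{lem:separation} is what lets me cleanly split $w=pq$ at the boundary.

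For the converse (the form \eqref{eq:dal} $\Rightarrow$ trapezoidal), I would directly bound the number of right special factors of each length and show it never exceeds one, using Proposition~\ref{prop:trap}(5). Because $p$ is a suffix of a power of $\tilde z_f$ and $q$ a prefix of a power of $z_g$, every sufficiently long right special factor is determined by its interaction with the single boundary occurrence of $u$, and periodicity prevents any second right special factor of the same length from arising. Finally, the concluding assertions ($K_w=|q|$, and that the longest right special factor is the longest proper prefix of $p$ of length $R_w-1$) I would read off from this analysis: the suffix of $w$ of length $|q|$ is unrepeated because $q$ sits past the last occurrence of $u$, giving $K_w=|q|$, and then $R_w=|p|$ follows from $R_w+K_w=|w|$ (Proposition~\ref{prop:trap}(3)). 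The main obstacle I expect is the forward direction: making the periodicity arguments on each side rigorous, i.e.\ proving that trapezoidality forces the exact power-structure of $p$ and $q$ rather than merely some weaker periodicity, and handling carefully the transition across the boundary region where $f$ and $g$ meet without overlapping.
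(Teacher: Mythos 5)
First, note that the paper does not prove Theorem~\ref{theor:dal} at all: it is quoted as D'Alessandro's result and attributed to \cite{Dal02}, so there is no in-paper proof to measure your attempt against. Judged on its own, your proposal is an outline rather than a proof, and the entire technical core is missing. In the forward direction, the statement ``I would show the right-special structure to the left of $u$ is periodic with period $\pi_f$, extracting that $p$ is a suffix of a power of $\tilde z_f$'' is precisely the hard content of D'Alessandro's theorem; no argument is given for why trapezoidality forces the occurrence of $f$ to propagate periodically all the way to the left end of $w$ (and symmetrically for $g$), nor for why the periods must be exactly $\pi_f$ and $\pi_g$ rather than ``some weaker periodicity'' --- an obstacle you yourself flag but do not overcome. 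In the converse direction, ``periodicity prevents any second right special factor of the same length from arising'' is asserted, not proved; one must actually account for the factors of $p$, the factors of $q$, and the factors straddling the cut, and verify that no length carries two right special factors. The concluding claims are likewise only gestured at: $K_w=|q|$ requires showing both that the suffix of length $|q|$ is unrepeated \emph{and} that the suffix of length $|q|-1$ is repeated, and you address neither.

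There is also a concrete error in the setup: you claim the right special factors of a trapezoidal word form a chain ``totally ordered by the prefix relation.'' Since every \emph{suffix} of a right special factor is right special, uniqueness per length forces a chain under the suffix order, not the prefix order (the longest right special factor happens to be a prefix of $w$, but the shorter ones are its suffixes). This matters because your proposed mechanism for locating the boundary and reading off the power structure of $p$ leans on that ordering. In short, the strategy is pointed in a reasonable direction and correctly identifies the relevant tools (Proposition~\ref{prop:trap}, the pathological pair, Lemma~\ref{lem:separation}), but as written it does not constitute a proof.
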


Hence, trapezoidal words are either Sturmian or of the form described in Theorem \ref{theor:dal}.


\begin{example}
 Let $w=aaababa$ be the non-Sturmian trapezoidal word considered in Example \ref{ex:1}. We have  $f=aaa$ and $g=bab$, so that $\tilde{z}_{f}=a$ and $z_{g}=ba$. The word $w$ can be written as $w=pq$, with $p=aaa$ and $q=baba$. 
\end{example}

\begin{remark}
 From now on, according to Theorem \ref{theor:dal}, $(f,g)$ is the pair of pathological factor of $w$ of minimal length.
\end{remark}

We now show that the words $p$ and $q$ of the factorization \eqref{eq:dal} are Sturmian words. We premise the following known result:

\begin{proposition}[cf.~\cite{DelDel06}]
Let $w$ be a finite word and $z_{w}$ its fractional root. Then $w$ is Sturmian if and only if so is every power of $z_{w}$.
\end{proposition}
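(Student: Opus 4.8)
The plan is to reduce the claim to a statement about the periodic infinite word $z_{w}^{\omega}$ and then prove the two implications. First observe that a power $z_{w}^{n}$ fails to be Sturmian precisely when it contains a pathological pair; any such pair, being finite, is a pair of factors of $z_{w}^{\omega}$, and conversely a pathological pair of $z_{w}^{\omega}$ already occurs in $z_{w}^{n}$ for $n$ large enough. Hence \emph{every power of $z_{w}$ is Sturmian if and only if $z_{w}^{\omega}$ is balanced}, and the proposition becomes: $w$ is Sturmian if and only if $z_{w}^{\omega}$ is balanced. (If $|\alph(w)|=1$ everything is trivial, so I assume $w$ binary below.)

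The implication ``$z_{w}^{\omega}$ balanced $\Rightarrow$ $w$ Sturmian'' is immediate: by definition of the fractional root, $w$ is a prefix, hence a factor, of $z_{w}^{k}$ with $k=\lceil|w|/|z_{w}|\rceil$, and every factor of a balanced word is balanced, since two equally long factors of such a factor are themselves factors of $z_{w}^{\omega}$ and so satisfy \eqref{eq:bal}. This settles the ``if'' direction of the proposition.

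For the converse I would first record two facts about $z_{w}$. It is \emph{primitive}: if $z_{w}=y^{k}$ with $k\ge 2$, then $w$ would admit the period $|y|<|z_{w}|=\pi_{w}$, against minimality of $\pi_{w}$; consequently the minimal period of $z_{w}^{\omega}$ is exactly $p:=|z_{w}|$. It is also \emph{balanced}, being a prefix of $w$. Now suppose, for contradiction, that $z_{w}^{\omega}$ is not balanced, and take a pathological pair $(aua,bub)$ of $z_{w}^{\omega}$ of minimal length; as recalled before Proposition~\ref{prop:centralroot}, $u$ is a central word, and $ua,ub$ being factors makes $u$ a right special factor of $z_{w}^{\omega}$. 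Here I use a counting observation: for every $n\ge p$ the word $z_{w}^{\omega}$ has exactly $p$ factors of length $n$, namely the $p$ cyclic rotations of $z_{w}$, pairwise distinct by primitivity; hence it has no right special factor of length $\ge p$. Therefore $|u|\le p-1$, so $aua$ and $bub$ have length at most $p+1$ and already occur in $z_{w}^{2}$ (every factor of $z_{w}^{\omega}$ of length $\le p+1$ appears in the prefix $z_{w}^{2}$ of length $2p$). In particular $z_{w}^{2}$ is not balanced.

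It thus remains to contradict ``$z_{w}$ balanced but $z_{w}^{2}$ not balanced'', and this is the heart of the argument. The plan is to use the structure of the central root: by definition $u$ carries two coprime periods $r,s$ with $|u|=r+s-2$, while as factors of $z_{w}^{\omega}$ the blocks $aua$ and $bub$ also interact with the global period $p$; reconciling these periodicities by a Fine--Wilf type argument should force the pathological pair to be realizable within a single copy of $z_{w}$, contradicting its balance. Equivalently, and more transparently, one shows that the primitive balanced word $z_{w}$ is a conjugate of a Christoffel word $au'b$ with $u'$ central, for which every power is balanced by the theory of central and Christoffel words. This passage --- upgrading the balance of the single block $z_{w}$ to that of its periodic extension $z_{w}^{\omega}$ --- is the main obstacle, and is exactly the point covered by the cited result \cite{DelDel06}; in a self-contained treatment I would either quote it or complete the Fine--Wilf computation in detail.
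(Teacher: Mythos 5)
The paper does not actually prove this proposition: it is quoted from \cite{DelDel06} and used as a black box, so there is no internal proof to compare yours against. Judged on its own, your attempt is correct and well organized up to a point --- the ``if'' direction (since $w$ is a prefix of $z_w^{k}$, balance of the powers passes to $w$) is fine, and in the ``only if'' direction the reduction is clean: primitivity of $z_w$ gives exactly $\pi_w$ factors of each length $\geq\pi_w$ in $z_w^{\omega}$, hence no right special factor of length $\geq\pi_w$, hence a minimal pathological pair of $z_w^{\omega}$ fits inside $z_w^{2}$.

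The gap is the final step, and it is not a cosmetic one. By that point you have retained only that $z_w$ is \emph{primitive and balanced}, and you propose to contradict ``$z_w$ balanced but $z_w^{2}$ not balanced.'' That is not a contradiction: $z=aababa$ is primitive and balanced, yet $z^{2}=aababaaababa$ contains the pathological pair $(aaa,bab)$. For the same reason your fallback claim that ``the primitive balanced word $z_w$ is a conjugate of a Christoffel word'' fails for $z=aababa$ (four $a$'s and two $b$'s, not coprime). What must be used, and what you have silently dropped, is that $|z_w|$ is the \emph{minimal} period of the balanced word $w$; indeed, for $w=aababa$ the fractional root is $aabab$, not $aababa$, which is why the proposition survives this example. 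Note that your argument does close itself when $|w|\geq 2\pi_w$, since then $z_w^{2}$ is a prefix of the balanced word $w$; the entire difficulty is concentrated in the case $\pi_w\leq|w|<2\pi_w$, where one must show that the fractional root of a balanced word (equivalently, any unbordered balanced word together with its bordered extensions of length less than $2\pi_w$) is a conjugate of a standard word. That statement is precisely the content of \cite{DelDel06}, so deferring it to the citation leaves the proposition unproved rather than proved; the promised Fine--Wilf computation is exactly the missing work.
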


\begin{lemma}\label{lem:pk}
Let $w$ be a trapezoidal non-Sturmian word. Then the words  $p$ and $q$ of the factorization \eqref{eq:dal} are Sturmian words. 
\end{lemma}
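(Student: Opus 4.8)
The plan is to treat $p$ and $q$ almost symmetrically, reducing the Sturmian property of each to two facts already at our disposal: that the pathological factors $f=aua$ and $g=bub$ are themselves Sturmian (established when introducing the central root), together with the proposition just recalled, which states that a word is Sturmian if and only if every power of its fractional root is Sturmian. I also use repeatedly that, since Sturmian means balanced, both the factors and the reversal of a Sturmian word are again Sturmian.

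First I would dispose of $q$. Since $g=bub$ is Sturmian and $z_{g}$ is its fractional root, the preceding proposition yields that every power $z_{g}^{n}$ is Sturmian. By \eqref{eq:dal} we have $q\in\Pref(\{z_{g}\}^{*})$, so $q$ is a prefix, hence a factor, of some power $z_{g}^{n}$. A factor of a balanced word is balanced, and therefore $q$ is Sturmian.

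The treatment of $p$ needs one additional observation, namely that reversal commutes with taking powers and preserves balance. We have $p\in\Suff(\{\tilde{z}_{f}\}^{*})$, so $p$ is a suffix of some power $(\tilde{z}_{f})^{n}$, and one checks directly that $(\tilde{z}_{f})^{n}=\widetilde{z_{f}^{\,n}}$. Now $f=aua$ is Sturmian with fractional root $z_{f}$, so by the preceding proposition $z_{f}^{\,n}$ is Sturmian; since the reversal of a balanced word is balanced, $\widetilde{z_{f}^{\,n}}=(\tilde{z}_{f})^{n}$ is Sturmian as well. Finally $p$, being a suffix and hence a factor of this Sturmian word, is Sturmian.

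There is no real obstacle here: the argument is simply an assembly of the preceding proposition with the two elementary closure properties of balanced words (under factors and under reversal). The only point requiring a moment's care is the bookkeeping $(\tilde{z}_{f})^{n}=\widetilde{z_{f}^{\,n}}$, which lets the reversal in the definition of $p$ be absorbed into a single application of the proposition to $f$; this identity is immediate from writing $z_{f}$ out letter by letter.
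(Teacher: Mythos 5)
Your proof is correct and follows essentially the same route as the paper's: both reduce to the fact that all powers of $z_{f}$ and $z_{g}$ are Sturmian (via the recalled proposition applied to the Sturmian words $f$ and $g$), and both handle $p$ by absorbing the reversal through the identity $(\tilde{z}_{f})^{n}=\widetilde{z_{f}^{\,n}}$ (the paper phrases this as restating $p\in\Suff(\{\tilde z_{f}\}^{*})$ as $\tilde p\in\Pref(\{z_{f}\}^{*})$) before invoking closure of Sturmian words under factors and reversal.
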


\begin{proof}
Since $f$ and $g$ are Sturmian, by the preceding proposition all powers of
$z_{f}$ and $z_{g}$ are Sturmian as well. Observe that
$p\in\Suff(\{\tilde z_{f}\}^{*})$ can be restated as
$\tilde p\in\Pref(\{z_{f}\}^{*})$. As prefixes and reversals of Sturmian words are Sturmian, the assertion follows.
\end{proof}


The following result, due to de Luca, Glen and Zamboni \cite{DelGlZa08}, states that trapezoidal palindromes are all Sturmian. We give a new proof based on Theorem \ref{theor:dal}.

\begin{theorem}\label{theor:trappal}
 The following conditions are equivalent:
 
\begin{enumerate}
 \item $w$ is a trapezoidal palindrome;
 \item $w$ is a Sturmian palindrome.
\end{enumerate}
\end{theorem}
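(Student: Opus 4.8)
The plan is to prove the equivalence by showing that the only nontrivial direction is $(1)\Rightarrow(2)$, since $(2)\Rightarrow(1)$ follows at once from Proposition~\ref{prop:sturmtrap} (every Sturmian word is trapezoidal) combined with the trivial fact that a palindrome stays a palindrome. So I would assume that $w$ is a trapezoidal palindrome and argue by contradiction that $w$ must be Sturmian.

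\medskip

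Suppose $w$ is trapezoidal but \emph{not} Sturmian. Then by Theorem~\ref{theor:dal} we may write $w=pq$ with $p\in\Suff(\{\tilde z_f\}^*)$ and $q\in\Pref(\{z_g\}^*)$, where $(f,g)$ is the pathological pair of minimal length, and by Proposition~\ref{prop:centralroot} there is an associated central root $u$ with $f=aua$ and $g=bub$ for distinct letters $a,b$. The heart of the argument is to derive a contradiction from the palindromic symmetry $\tilde w=w$. The key observation is that a pathological pair is highly asymmetric: $w$ contains the factor $aua=f$ and the factor $bub=g$, and these encode an imbalance in the count of $a$'s. Since $u$ is a palindrome (being central), reversing $w$ sends the occurrence of $aua$ to an occurrence of $\widetilde{aua}=aua=f$ again, and similarly $g$ to $g$; so $f$ and $g$ remain factors of $\tilde w=w$, which is consistent. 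The real leverage comes instead from the positional/structural data in Theorem~\ref{theor:dal}: the longest right special factor of $w$ is the longest proper prefix of $p$, and $K_w=|q|$. Applying the same theorem to $\tilde w$ (which equals $w$) one finds that the roles of $p$ and $q$, and of $f$ and $g$, get swapped under reversal, because the reversal of a prefix of $\{z_g\}^*$ is a suffix of $\{\tilde z_g\}^*$. Matching these two descriptions of the \emph{same} word $w$ forces a strong compatibility between $z_f$ and $z_g$ (essentially $\tilde z_f$ and $z_g$ must generate the same periodic structure), which I expect to collapse the imbalance and contradict the existence of the pathological pair $(f,g)$.

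\medskip

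Concretely, the steps I would carry out are: first, record that $u$ is a palindrome and that $\tilde f=f$, $\tilde g=g$; second, compute how the factorization $w=pq$ transforms under reversal, obtaining $w=\tilde w=\tilde q\,\tilde p$ with $\tilde q\in\Suff(\{\tilde z_g\}^*)$ and $\tilde p\in\Pref(\{z_f\}^*)$; third, invoke the uniqueness clauses (the central root $u$ is unique by Proposition~\ref{prop:centralroot}, and the minimal pathological pair is unique) to identify the two factorizations and conclude that $\tilde z_g$ and $\tilde z_f$ play interchangeable roles, i.e. $f$ and $g$ have, up to reversal, the same fractional root; fourth, observe that $f=aua$ and $g=bub$ with the \emph{same} central palindrome $u$ but different bordering letters cannot have roots that are reversals of one another unless $|u|$ is forced into a degenerate case, yielding the contradiction.

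\medskip

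The main obstacle will be step three: carefully extracting a genuine contradiction from the reversal symmetry rather than a tautology. Since $f$ and $g$ are themselves palindromes, naive reversal arguments merely confirm that $f,g\in\Fact(w)$ and give nothing new; the contradiction has to come from the \emph{uniqueness} of the factorization \eqref{eq:dal} and of the central root, forcing $p$ and $q$ to be reversal-compatible in a way incompatible with $|f|_a\neq|g|_a$ up to one. I would therefore spend most of the effort making precise how Theorem~\ref{theor:dal}, applied to $w$ and to $\tilde w=w$ simultaneously, pins down both $p$ and $q$ and thereby the two roots $z_f,z_g$, so that the defining inequality~\eqref{eq:bal} can be restored, contradicting that $(f,g)$ is pathological.
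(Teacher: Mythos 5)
Your reduction to the single implication ``trapezoidal palindrome $\Rightarrow$ Sturmian'' and your opening move (assume $w$ is non-Sturmian and invoke Theorem~\ref{theor:dal}) coincide with the paper's, but the argument you build on top of that has a genuine gap exactly where you flag it. Step three does not go through as stated: the factorization $w=pq$ cuts $w$ at position $R_{w}$ (since $K_{w}=|q|$ and $w$ is trapezoidal), while the reversed factorization $w=\tilde w=\tilde q\,\tilde p$ cuts it at position $K_{w}$, and nothing in your argument establishes $R_{w}=K_{w}$; moreover $\tilde q\,\tilde p$ is a factorization of the form \eqref{eq:dal} for the \emph{swapped} ordered pair $(g,f)$ (after reversal it is $g$ that occurs in the left part), so the uniqueness you want to invoke --- which is relative to the ordered minimal pathological pair --- does not identify the two decompositions. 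Step four is then an unproved assertion: even granting that $\tilde z_{f}$ and $z_{g}$ ``generate the same periodic structure,'' you would still need an actual argument that this is incompatible with $||f|_{a}-|g|_{a}|=2$, and as written this is a hope (``which I expect to collapse the imbalance''), not a proof.

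The move you dismiss as ``merely confirming that $f,g\in\Fact(w)$'' is in fact the whole proof. Since $|p|\ge|f|$ and $|q|\ge|g|$, and $f,g$ are palindromes with fractional roots $z_{f},z_{g}$, one gets $f\in\Suff(p)$ and $g\in\Pref(q)$, hence $w=v_{1}fgv_{2}$ with $p=v_{1}f$ and $q=gv_{2}$. Reversal then does yield new information, namely the \emph{position} of a second occurrence: $w=\tilde w=\tilde v_{2}\,g\,f\,\tilde v_{1}$, so $g$ also occurs starting at position $|v_{2}|+1$. Assuming without loss of generality $|v_{1}|\ge|v_{2}|$, that occurrence of $g$ ends by position $|v_{2}|+|g|\le|v_{1}|+|f|=|p|$, i.e.\ it lies entirely inside $p$. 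Thus $p$ contains both $f$ and $g$, hence a pathological pair, hence is not balanced --- contradicting Lemma~\ref{lem:pk}. No uniqueness of the factorization and no comparison of fractional roots is needed; I would redirect your effort from step three to this positional observation.
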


\begin{proof} 
Let $w$ be a trapezoidal palindrome. If $w$ is non-Sturmian we can write, by Theorem \ref{theor:dal}, $w=pq=v_{1}fgv_{2}=\tilde{v}_{2}gf\tilde{v}_{1}$, with $v_{1},v_{2}\in \Sigma^{*}$ such that $p=v_{1}f$ and $q=gv_{2}$. Without loss of generality, we can assume $|v_1|\geq |v_2|$; then $p=v_1f$ contains $f$ and $g$ as factors, a contradiction since, by Lemma \ref{lem:pk}, $p$ is Sturmian. So $w$ cannot be a non-Sturmian word. 

Hence, we proved that trapezoidal palindromes are Sturmian. Since by Proposition \ref{prop:sturmtrap}, Sturmian words are trapezoidal, the assertion follows.
%
\end{proof}

%
%
%

\section{An enumerative formula for trapezoidal words}\label{sec:formula}

Mignosi \cite{Mig91} proved the following formula (see also \cite{Lip82}) for the number $S(n)$ of Sturmian words of length $n$: 
\begin{equation}\label{eq:sturm}
 S(n)=1+\sum_{i=1}^{n}(n-i+1)\phi(i),
\end{equation}
where $\phi$ is the Euler totient function.

Therefore, by Proposition \ref{prop:sturmtrap}, in order to count the number of trapezoidal words of length $n$, it is sufficient to count the number of non-Sturmian trapezoidal words of that length. Recall from Theorem \ref{theor:dal} that every non-Sturmian trapezoidal word $w$ can be uniquely written as $w=pq$, with $p \in \Suff(\{\tilde{z}_{f}\}^{*})$ and $q\in \Pref(\{z_{g}\}^{*})$, with $f=aua$, $g=bub$, where $u$ is the central root of $w$ and $(f,g)$ is the pathological pair of $w$ of minimal length. Our goal is to count, for a given central word $u$, how many non-Sturmian trapezoidal words exist with central root $u$ and length $n$.

Let us premise the following technical lemma, which is a straightforward consequence of Theorem \ref{theor:dal}.

\begin{lemma}\label{lem:tech}
 Let $u$ be a central word, and $x,y\in \Sigma$ different letters. The non-Sturmian trapezoidal words having central root $u$ are exactly the words of the form $pq$, for any $p\in \Suff(\{\tilde{z}_{xux}\}^{*})$ with $|p|\ge |u|+2$, and $q\in \Pref(\{z_{yuy}\}^{*})$ with $|q|\ge |u|+2$.
\end{lemma}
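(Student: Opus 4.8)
The plan is to treat the statement as Theorem~\ref{theor:dal} reparametrized through the central root. Write $f=xux$ and $g=yuy$, so that (as $u$ is central, hence a palindrome) $f,g$ are palindromes of length $|u|+2$ with fractional roots $z_f=z_{xux}$ and $z_g=z_{yuy}$. By Proposition~\ref{prop:centralroot} a non-Sturmian word has central root $u$ precisely when its minimal pathological pair is $(xux,yuy)$ (fixing the ordered pair $(x,y)$ just fixes which pathological factor, namely $xux$, lands on the prefix side), so the membership conditions $p\in\Suff(\{\tilde z_{xux}\}^{*})$ and $q\in\Pref(\{z_{yuy}\}^{*})$ are literally those of Theorem~\ref{theor:dal}. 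The genuine content to be added is thus the two length bounds $|p|,|q|\ge|u|+2$, together with the verification that they single out exactly the words of central root $u$.

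For the inclusion from left to right, let $w$ be a non-Sturmian trapezoidal word with central root $u$. Proposition~\ref{prop:centralroot} identifies its minimal pathological pair as $(f,g)=(xux,yuy)$, and Theorem~\ref{theor:dal} yields $w=pq$ with $p\in\Suff(\{\tilde z_{xux}\}^{*})$ and $q\in\Pref(\{z_{yuy}\}^{*})$. As already used in the proof of Theorem~\ref{theor:trappal}, this factorization has $f$ as a suffix of $p$ and $g$ as a prefix of $q$ (so that $w=v_1fgv_2$); hence $|p|\ge|f|=|u|+2$ and $|q|\ge|g|=|u|+2$, giving one inclusion.

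For the converse, fix $p,q$ obeying the stated conditions and set $w=pq$. The length bounds force $w$ to contain the pair $(xux,yuy)$, by the following periodicity observation: since $f$ is a palindrome of period $|z_f|$, it equals the length-$|f|$ suffix of every long enough power of $\tilde z_f$, so $|p|\ge|u|+2$ makes $p$ end in $f$; symmetrically $|q|\ge|u|+2$ makes $q$ begin with $g$. Thus $w$ contains the pathological pair $(xux,yuy)$ and is non-Sturmian. To apply the ``if'' direction of Theorem~\ref{theor:dal} and conclude that $w$ is trapezoidal, it remains to check that $(xux,yuy)$ is the \emph{minimal} pathological pair of $w$, i.e.\ that $u$, and not a shorter central word, is the central root of $w$.

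This minimality is the step I expect to be the main obstacle. By the argument of Lemma~\ref{lem:pk} --- which uses only that $f,g$ are Sturmian (Proposition~\ref{prop:newcent}) and that $p,q$ are suffixes/prefixes of powers of their roots --- both $p$ and $q$ are balanced, so a pathological pair of $w$ can have neither both members inside $p$ nor both inside $q$. It then suffices to prove that $w$ has no pathological pair of length below $|u|+2$ at all, that is, that $w$ is balanced up to that length. I would establish this by comparing, for each $\ell<|u|+2$, the $a$-counts of the length-$\ell$ factors of $p$ with those of $q$ and showing they never differ by more than one: because $xux$ and $yuy$ share the central word $u$ in the middle, up to length $|u|+1$ the short factors of $p$ and of $q$ are exactly those of extensions of the common bispecial factor $u$ (Proposition~\ref{prop:centrbis}) and remain balanced, the $a$-excess separating $xux$ from $yuy$ surfacing only at length $|u|+2$. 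Granting this, $(xux,yuy)$ is minimal, Theorem~\ref{theor:dal} applies, and $w$ is a non-Sturmian trapezoidal word of central root $u$. Finally, since Theorem~\ref{theor:dal} fixes $|q|=K_w$, the factorization $w=pq$ is unique, so the correspondence is bijective --- the form in which Section~\ref{sec:formula} will use it.
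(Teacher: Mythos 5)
The paper never actually proves this lemma --- it is introduced with the words ``a straightforward consequence of Theorem~\ref{theor:dal}'' and no argument is given --- so there is no official proof to compare yours against; I can only judge your reconstruction on its own terms. Your overall strategy (reparametrize Theorem~\ref{theor:dal} through the central root, so that the only new content is the two length bounds together with the minimality of the pair $(xux,yuy)$) is the right one, and your forward direction is at the same level of rigour as the paper's own use of the factorization $w=v_1fgv_2$ inside the proof of Theorem~\ref{theor:trappal}. You also correctly isolate the one genuinely non-trivial point: in the converse direction one must show that $(xux,yuy)$ is the \emph{minimal} pathological pair of $w=pq$, because Theorem~\ref{theor:dal} characterizes trapezoidality only relative to that minimal pair.

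That, however, is exactly the step you do not complete. Your argument for it is explicitly conditional (``Granting this\ldots''), and the intermediate claim you would base it on is false as stated: the factors of $p$ of length at most $|u|+1$ are \emph{not} in general factors of the extensions $\alpha u\beta$ of the common bispecial factor $u$. For instance, take $u=aba$ and $x=a$, so $xux=aabaa$ and $\tilde{z}_{xux}=baa$; the word $p=baabaa\in \Suff(\{baa\}^{*})$ satisfies $|p|\ge |u|+2$ and contains the factor $baab$ of length $|u|+1$, which is a factor neither of $xux=aabaa$ nor of any word $\alpha u \beta$ with $\alpha,\beta$ letters. In addition, your balance check only compares factors of $p$ with factors of $q$, ignoring the length-$\ell$ factors of $w$ that straddle the boundary between $p$ and $q$ (for $\ell\le |u|+1$ these live in $uxyu$, not in $p$ or in $q$). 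So the assertion that $w=pq$ is balanced up to length $|u|+1$ --- equivalently, that $(xux,yuy)$ is minimal --- remains unproved; the statement is true and the gap looks fillable (all short factors of $w$ lie in the three individually balanced sets $\Fact(\{\tilde{z}_{xux}\}^{*})$, $\Fact(\{z_{yuy}\}^{*})$ and $\Fact(uxyu)$, and one must rule out cross-unbalance between them, e.g.\ via the Coven--Hedlund lemma), but a genuine argument is still needed there.
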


\begin{theorem}\label{theor:enum}
 For all $n\geq 0$, the number of trapezoidal words of length $n$ is
\[1+\sum_{i=1}^{n}(n-i+1)\phi(i) \ +  \sum_{i=0}^{\lfloor (n-4)/2\rfloor}2(n-2i-3)\phi(i+2).\]
\end{theorem}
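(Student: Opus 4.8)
The plan is to combine the known Sturmian count $S(n)=1+\sum_{i=1}^{n}(n-i+1)\phi(i)$ with a count of the non-Sturmian trapezoidal words of length $n$, so that the total is $S(n)+T(n)$ with $T(n)=\sum_{i=0}^{\lfloor(n-4)/2\rfloor}2(n-2i-3)\phi(i+2)$. Since every Sturmian word is trapezoidal (Proposition \ref{prop:sturmtrap}) and these two families are disjoint, it suffices to establish the formula for $T(n)$, the number of non-Sturmian trapezoidal words of length $n$. By Proposition \ref{prop:centralroot} every non-Sturmian trapezoidal word has a unique central root $u$, so I would partition the non-Sturmian trapezoidal words of length $n$ according to their central root $u$ and sum over all central words $u$ the number of such words of length $n$ with central root $u$.

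First I would fix a central word $u$ of length $k=|u|$ and use Lemma \ref{lem:tech}: the non-Sturmian trapezoidal words with central root $u$ are exactly $pq$ with $p\in\Suff(\{\tilde z_{xux}\}^{*})$, $|p|\ge k+2$, and $q\in\Pref(\{z_{yuy}\}^{*})$, $|q|\ge k+2$, where $\{x,y\}=\{a,b\}$. The key observation is that $f=xux$ and $g=yuy$ are central-plus-one words whose fractional roots have a length determined by the periods of $u$; specifically, if $u$ has coprime periods $p_1,q_1$ with $k=p_1+q_1-2$, then the relevant period governing $z_{xux}$ and $z_{yuy}$ is $k+2$ minus the longer period, i.e.\ $\min\{p_1,q_1\}$ or a comparable quantity. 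I would compute exactly the length $\ell$ of the fractional roots $z_{xux}$ and $z_{yuy}$ (they will have the same length by the palindromic symmetry of $u$), so that for each admissible length $|p|$ there is a unique valid $p$, and likewise for $q$; the constraints $|p|,|q|\ge k+2$ together with $|p|+|q|=n$ then become the combinatorial heart of the count.

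Next I would count, for fixed $u$ with $|u|=k$, the number of pairs $(|p|,|q|)$ of lengths with $|p|+|q|=n$ and $|p|,|q|\ge k+2$; this is a linear count yielding roughly $n-2k-3$ choices (matching the shape $2(n-2i-3)$ after reindexing), and the factor $2$ arises from the two ways to assign the letters $\{x,y\}$ to the two "sides," i.e.\ whether $f=aua,g=bub$ or $f=bub,g=aua$. Summing over all central words $u$ of each length then requires knowing \emph{how many} central words of length $k$ there are; the standard fact is that the number of central words of length $k$ equals $\phi(k+2)$ (central words of length $k$ correspond bijectively to pairs of coprime periods summing to $k+2$, counted by the totient). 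Writing $k=|u|$ and setting $i=k$ so that $\phi(i+2)$ counts the central words, the per-root count $2(n-2k-3)$ times $\phi(k+2)$, summed over $k=0,\dots,\lfloor(n-4)/2\rfloor$, produces exactly $T(n)$.

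The main obstacle will be pinning down precisely the fractional-root lengths $|z_{xux}|=|z_{yuy}|$ in terms of the period data of $u$, and verifying that for each admissible value of $|p|$ (resp.\ $|q|$) in the allowed range there is \emph{exactly one} word $p\in\Suff(\{\tilde z_{xux}\}^{*})$ (resp.\ $q$) of that length — so that the count reduces cleanly to counting length-pairs. A secondary delicate point is the correct determination of the summation bounds: the lower bounds $|p|,|q|\ge k+2$ force $n\ge 2k+4$, hence $k\le\lfloor(n-4)/2\rfloor$, which is exactly the upper limit in the stated sum, and one must check the boundary cases (short $u$, degenerate central words $a^k$ or $b^k$) do not over- or under-count. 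Once these facts are assembled, the identification of the number of central words of length $i$ with $\phi(i+2)$ and the reindexing are routine, and the formula $S(n)+T(n)$ follows.
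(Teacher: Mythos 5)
Your proposal is correct and follows essentially the same route as the paper: split the count as $S(n)+T(n)$, partition the non-Sturmian trapezoidal words by their central root via Lemma \ref{lem:tech}, count length-pairs $(|p|,|q|)$ with $|p|+|q|=n$ and $|p|,|q|\ge |u|+2$, multiply by $2$ for the assignment of $\{x,y\}$ and by $\phi(i+2)$ for the number of central words of length $i$. The only remark worth making is that the ``main obstacle'' you anticipate is a non-issue: for \emph{any} word $z$, the set $\Pref(\{z\}^{*})$ (resp.\ $\Suff(\{\tilde z\}^{*})$) contains exactly one word of each length $j\ge 0$, namely the length-$j$ prefix of $z^{\omega}$, so no computation of the fractional-root lengths of $xux$ and $yuy$ is needed.
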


\begin{proof}
We first notice that for any word $z$ and any $j \ge 0$, there exists a unique word of length $j$ in $\Suff(\{\tilde{z}\}^{*})$ (resp.~in $\Pref(\{z\}^{*})$). From this remark and Lemma \ref{lem:tech}, we deduce that for any $n\ge 4$ and for any central word $u$ of length $0\le |u|\le \lfloor (n-4)/2\rfloor$, there are exactly $2(n-2(|u|+2)+1)=2(n-2|u|-3)$ non-Sturmian trapezoidal words of length $n$ having central root $u$. Since there are $\phi(i+2)$ central words of length $i$ (see \cite{DelMi94}), we have that the number of non-Sturmian trapezoidal words of length $n$ is  
\begin{equation}\label{eq:trapform}
 T(n)=\sum_{i=0}^{\lfloor (n-4)/2\rfloor}2(n-2i-3)\phi(i+2).
\end{equation}
Since the number of trapezoidal words of length $n$ is $S(n)+T(n)$, by \eqref{eq:sturm} and \eqref{eq:trapform} the statement follows.
\end{proof}

In Table \ref{table:trap} we give the number of trapezoidal words of each length up to 20.

\begin{table}[ht]
\centering  
\begin{small}
\begin{raggedright}
\begin{tabular}{c *{30}{@{\hspace{2.7mm}}l}}
 $n$    & 1\hspace{1ex} & 2\hspace{1ex} & 3\hspace{1ex} & 4\hspace{1ex} & 5\hspace{1ex} & 6\hspace{1ex} & 7\hspace{1ex} &
8\hspace{1ex} & 9\hspace{1ex} & 10 & 11 & 12 & 13 & 14 & 15 & 16 & 17 & 18 & 19 & 20\\
\hline \rule[-6pt]{0pt}{22pt}
$\textit{Trap} \cap \Sigma^{n}$ & 2 & 4 & 8 & 16 & 28 & 46 & 70 & 102 & 140 & 190 & 250 & 318 & 398 & 496 & 602 & 724 & 862 & 1018 & 1192 & 1382 \\
\hline \rule[-6pt]{0pt}{22pt}
\end{tabular}
\end{raggedright}\caption{The number of trapezoidal words of each length up to 20.\label{table:trap}}
\end{small}
\end{table}

\section{Open and closed trapezoidal words}\label{sec:openclosed}

In this section we derive some properties of open and closed trapezoidal words. Recall that a word $w$ is closed if it is empty or it has a factor (different from $w$) occurring exactly twice in $w$, as a prefix and as a suffix; otherwise it is open.

\begin{remark}
Let $w$ be a non-empty word over $\Sigma$. The following characterizations of closed words follow easily from the definition:
 
\begin{enumerate}
 \item $w$ has a factor occurring exactly twice in $w$, as a prefix and as a suffix of $w$; 
 \item the longest repeated prefix of $w$ does not have internal occurrences in $w$, that is, occurs in $w$ only as a prefix and as a suffix;
  \item the longest repeated suffix of $w$ does not have internal occurrences in $w$, that is, occurs in $w$ only as a suffix and as a prefix;
  \item the longest repeated prefix of $w$ is not right special in $w$;
    \item the longest repeated suffix of $w$ is not left special in $w$;
 \item $w$ has a border that does not have internal occurrences in $w$;
 \item the longest border of $w$ does not have internal occurrences in $w$;
 \item $w$ is the complete return to its longest repeated prefix;
 \item $w$ is the complete return to its longest border;
 \item $w=uv=zu$, with $v,z$ non-empty, and $\Fact(w)\cap \Sigma u \Sigma = \emptyset$.
\end{enumerate}
\end{remark}

\begin{proposition}\label{prop:rev}
Let $w$ be a word. Then $w$ is open (resp.~closed) if and only if $\tilde{w}$ is open (resp.~closed).
\end{proposition}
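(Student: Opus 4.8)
The plan is to exploit the symmetry between prefixes and suffixes under reversal, which is exactly what the definition of closed rests upon. First I would recall that reversal is an involution ($\tilde{\tilde{w}}=w$) and that it exchanges the roles of prefixes and suffixes: a word $z$ is a prefix of $w$ if and only if $\tilde z$ is a suffix of $\tilde w$, and dually $z$ is a suffix of $w$ if and only if $\tilde z$ is a prefix of $\tilde w$. More generally, $z\in\Fact(w)$ if and only if $\tilde z\in\Fact(\tilde w)$, and the occurrences of $z$ in $w$ correspond bijectively, via reversal, to the occurrences of $\tilde z$ in $\tilde w$; under this correspondence an occurrence that is a prefix (resp.\ suffix, resp.\ internal) of $w$ becomes an occurrence that is a suffix (resp.\ prefix, resp.\ internal) of $\tilde w$.

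Next I would use these observations directly on Definition \ref{def:closed}. Suppose $w$ is closed and non-empty (the empty case is immediate, since $\tilde\epsilon=\epsilon$ is closed by definition). Then $w$ has a proper factor $z$ occurring exactly twice, once as a prefix and once as a suffix, with no internal occurrences. Applying the reversal correspondence, $\tilde z$ is a proper factor of $\tilde w$ that occurs exactly twice: the prefix occurrence of $z$ becomes a suffix occurrence of $\tilde z$, the suffix occurrence of $z$ becomes a prefix occurrence of $\tilde z$, and the absence of internal occurrences is preserved. Hence $\tilde z$ witnesses that $\tilde w$ is closed. This proves the implication $w$ closed $\Rightarrow$ $\tilde w$ closed.

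For the converse I would simply apply the implication just proved to $\tilde w$ in place of $w$, using that $\tilde{\tilde w}=w$: if $\tilde w$ is closed then $\tilde{\tilde w}=w$ is closed. Thus $w$ is closed if and only if $\tilde w$ is closed. Finally, since open is by definition the negation of closed, the statement for open words follows by contraposition: $w$ is open iff $w$ is not closed iff $\tilde w$ is not closed iff $\tilde w$ is open.

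There is essentially no hard part here; the entire content is the bookkeeping that reversal swaps prefixes and suffixes while fixing the set of internal occurrences. The only point requiring a sentence of care is the bijective correspondence of occurrences (to guarantee that ``occurring exactly twice'' is preserved and that prefix/suffix/internal status is mapped as claimed), so that is where I would place the one explicit remark; everything else is a formal consequence of $\tilde{\tilde w}=w$.
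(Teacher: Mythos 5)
Your proof is correct and fills in exactly the symmetry argument (reversal swaps prefixes and suffixes, preserves internal occurrences, and is an involution) that the paper leaves implicit by declaring the proposition ``Straightforward.'' No issues; this matches the intended approach.
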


\begin{proof}
Straightforward.
%
\end{proof}

The following proposition gives a characterization of open trapezoidal words.

\begin{proposition}\label{prop:openspe}
 Let $w$ be a trapezoidal word. Then the following conditions are equivalent:
\begin{enumerate}
\item $w$ is open;
 \item the longest repeated prefix of $w$ is also the longest right special factor of $w$;
 \item the longest repeated suffix of $w$ is also the longest left special factor of $w$.
\end{enumerate}
\end{proposition}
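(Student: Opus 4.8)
The plan is to prove the equivalences by relating the combinatorial parameters of $w$ to its complexity structure via Proposition~\ref{prop:del}, and then translating the ``closed'' condition into a statement about special factors. Recall that $H_w$ (resp.\ $K_w$) is the length of the shortest unrepeated prefix (resp.\ suffix), so the longest repeated prefix has length $H_w-1$ and the longest repeated suffix has length $K_w-1$; similarly $R_w-1$ (resp.\ $L_w-1$) is the length of the longest right special (resp.\ left special) factor. By Proposition~\ref{prop:trap} we know $|w|=R_w+K_w=L_w+H_w$ for trapezoidal words. I would first establish the general inequalities $R_w-1 \le H_w-1$ and $L_w-1 \le K_w-1$, i.e.\ that the longest right special factor is never longer than the longest repeated prefix (a right special factor must occur at least twice, hence is a repeated factor; and if it is a prefix, it is a repeated prefix --- one needs the trapezoidal structure to ensure the longest right special factor is actually a prefix, which is exactly the last assertion of Theorem~\ref{theor:dal} in the non-Sturmian case, and is standard in the Sturmian case).

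The core of the argument is the equivalence $(1)\Leftrightarrow(2)$. For the direction ``open $\Rightarrow$ longest repeated prefix is longest right special,'' I would argue contrapositively: suppose the longest repeated prefix $r$ of $w$ is strictly longer than the longest right special factor. Then $r$ is not right special, so every occurrence of $r$ in $w$ is followed by the same letter (or is a suffix). I would show this forces $r$ to have no internal occurrence except as a suffix, so that $r$ occurs only as a prefix and as a suffix, making $w$ closed (via characterization (4) in the Remark, ``the longest repeated prefix is not right special'' is literally equivalent to closed). Conversely, if $r$ is right special, then $r$ is followed in $w$ by two different letters, so $r$ has at least one internal or non-suffix occurrence, and I would deduce $w$ cannot be closed. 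In fact the cleanest route is to invoke the Remark directly: condition (4) there states $w$ is closed iff its longest repeated prefix is not right special. Thus ``$w$ open'' means the longest repeated prefix \emph{is} right special; since it is always a repeated factor of length $H_w-1\ge R_w-1$ and right special factors have length at most $R_w-1$, equality $H_w=R_w$ holds, and then the longest repeated prefix coincides with the longest right special factor. This yields $(1)\Leftrightarrow(2)$.

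The equivalence $(1)\Leftrightarrow(3)$ then follows by a symmetry argument. By Proposition~\ref{prop:rev}, $w$ is open if and only if $\tilde w$ is open, and reversal exchanges prefixes with suffixes and left special with right special factors: the longest repeated prefix of $\tilde w$ is the reversal of the longest repeated suffix of $w$, and the longest right special factor of $\tilde w$ is the reversal of the longest left special factor of $w$. Applying the already-proved equivalence $(1)\Leftrightarrow(2)$ to $\tilde w$ (which is trapezoidal by Proposition~\ref{prop:trap}, since the factor complexity is reversal-invariant) and translating back gives condition~$(3)$ for $w$.

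The step I expect to be the main obstacle is the claim that for a trapezoidal word the longest repeated prefix is always at least as long as the longest right special factor, with equality precisely in the open case --- that is, controlling the relationship between the parameters $H_w$ and $R_w$. In the Sturmian case one has $H_w=R_w$ or $H_w$ differs from $R_w$ in a controlled way, but the non-Sturmian case requires care: one must use the factorization $w=pq$ of Theorem~\ref{theor:dal} and the identity $K_w=|q|$ together with ``the longest right special factor is the longest proper prefix of $p$'' to pin down $R_w$ exactly, and then compare with $H_w$. I would handle this by combining the last sentence of Theorem~\ref{theor:dal} with Proposition~\ref{prop:del} to locate where the complexity reaches its maximum, thereby reducing everything to the clean parameter identity $|w|=R_w+K_w$ and the characterization (4) of closed words in the Remark.
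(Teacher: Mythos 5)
Your reduction of condition (1) to ``the longest repeated prefix $h_w$ is right special'' via item (4) of the Remark is sound, as is the symmetry argument deducing $(1)\Leftrightarrow(3)$ from $(1)\Leftrightarrow(2)$ through Proposition~\ref{prop:rev}, and your $(2)\Rightarrow(1)$ direction matches the paper. The gap is in upgrading ``$h_w$ is right special'' to ``$h_w$ is the \emph{longest} right special factor''. You propose to do this through the ``general inequality'' $R_w-1\le H_w-1$, justified by the claim that the longest right special factor of a trapezoidal word is a prefix (via Theorem~\ref{theor:dal} in the non-Sturmian case, and ``standard'' in the Sturmian case). Both the inequality and its justification fail in the Sturmian case. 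Take $w=abbbba$: it is balanced, hence a Sturmian (indeed closed) trapezoidal palindrome; its longest repeated prefix is $a$, so $H_w=2$, while its longest right special factor is $bbb$ (both $bbbb$ and $bbba$ are factors), so $R_w=4$ and that factor is not a prefix of $w$. The word $abab$ (with $H_w=3$, $R_w=1$) shows the opposite inequality can also fail, so there is no a priori comparison between $R_w$ and $H_w$ for trapezoidal words: the equality $H_w=R_w$ holds precisely for open words, which is essentially the content of the proposition, so assuming it here is circular. Your closing paragraph acknowledges this is the hard step but only offers Theorem~\ref{theor:dal} as a remedy, which covers the non-Sturmian words and says nothing about open Sturmian words such as $aaabaa$.

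What is needed instead --- and what the paper does --- is a direct argument: if $r$ were a right special factor of $w$ longer than $h_w$, then, since a trapezoidal word has at most one right special factor of each length (Proposition~\ref{prop:trap}) and every suffix of a right special factor is right special, $h_w$ would have to be a proper suffix of $r$; the two occurrences of $r$ followed by distinct letters then produce a non-prefix occurrence of $h_wx$ (where $h_wx$ is the prefix of $w$ of length $H_w$), contradicting the maximality of $h_w$ as a repeated prefix. This uses only the uniqueness of right special factors of each length, with no comparison of $R_w$ and $H_w$, and is the piece your proposal is missing.
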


\begin{proof}
$(1)\Rightarrow (2)$. Let $h_{w}$ be the longest repeated prefix of $w$ and $x$ the letter such that $h_{w}x$ is a prefix of $w$. Since $w$ is not closed, $h_{w}$ has a second non-suffix occurrence in $w$ followed by a letter $y$. Since $h_{w}$ is the longest repeated prefix of $w$, we have $y\neq x$. Therefore, $h_{w}$ is right special in $w$.
 
 Suppose that $w$ has a right special factor $r$ longer than $h_{w}$. Since $w$ is trapezoidal, $w$ has at most one right special factor for each length (Proposition \ref{prop:trap}). Since the suffixes of a right special factor are right special factors, we have that $h_{w}$ must be a proper suffix of $r$. Since $r$ is right special in $w$, it has at least two occurrences in $w$ followed by different letters. This implies a non-prefix occurrence of $h_{w}x$ in $w$, against the definition of $h_{w}$.
 
%


$(2)\Rightarrow (1)$. Let $h_{w}$ be the longest repeated prefix of $w$. Since $h_{w}$ is a right special factor of $w$ there exist different letters $a$ and $b$ such that $h_{w}a$ and $h_{w}b$ are factors of $w$. This implies that $h_{w}$ cannot appear in $w$ only as a prefix and as a suffix, that is, $w$ cannot be a closed word. 

$(1)\Leftrightarrow (3)$ can be proven symmetrically.
\end{proof}

\begin{proposition}\label{prop:HRKL}
 Let $w$ be a trapezoidal word. If $w$ is open, then $H_{w}=R_{w}$ and $K_{w}=L_{w}$. If $w$ is closed, then $H_{w}=K_{w}$ and $L_{w}=R_{w}$.
\end{proposition}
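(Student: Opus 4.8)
The plan is to translate both assertions into equalities between the lengths of certain repeated and special factors, using the standard dictionary between de Luca's parameters and these factors: the longest repeated prefix (resp.\ suffix) of $w$ has length $H_w-1$ (resp.\ $K_w-1$), since $H_w$ (resp.\ $K_w$) is the \emph{minimal} length of an unrepeated prefix (resp.\ suffix); and symmetrically the longest right (resp.\ left) special factor has length $R_w-1$ (resp.\ $L_w-1$), since $R_w$ (resp.\ $L_w$) is the minimal length at which no such special factor survives. Once this correspondence is fixed, each claimed equality of parameters becomes an equality of factor lengths.

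For the open case I would invoke Proposition \ref{prop:openspe} directly. Its condition (2) states that the longest repeated prefix coincides with the longest right special factor; comparing lengths gives $H_w-1=R_w-1$, hence $H_w=R_w$. Its condition (3) symmetrically identifies the longest repeated suffix with the longest left special factor, giving $K_w-1=L_w-1$, hence $K_w=L_w$. So the open half requires nothing beyond reading off lengths from Proposition \ref{prop:openspe}.

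For the closed case I would first establish $H_w=K_w$. By Definition \ref{def:closed} (and the characterizations recorded in the preceding Remark), a closed word is the complete return to a single factor, which is simultaneously its longest repeated prefix and its longest repeated suffix; in particular those two factors have equal length, so $H_w-1=K_w-1$ and therefore $H_w=K_w$. To deduce $L_w=R_w$ I would then apply Proposition \ref{prop:trap}, which furnishes the two identities $|w|=R_w+K_w$ and $|w|=L_w+H_w$. Equating the right-hand sides and cancelling $K_w=H_w$ leaves $R_w=L_w$. This argument also subsumes the degenerate closed word $w=a^n$, for which $H_w=K_w=|w|$ and $R_w=L_w=0$, and both identities of Proposition \ref{prop:trap} hold trivially.

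The step I would check most carefully—and the only place where an error could creep in—is the off-by-one bookkeeping relating the parameters (each defined as a \emph{minimal} length at which a repetition or speciality disappears) to the lengths of the longest repeated or special factors. No genuine obstacle remains once that dictionary is pinned down: the open half is a single application of Proposition \ref{prop:openspe}, and the closed half follows from the closed-word identity $H_w=K_w$ combined with the two length identities of Proposition \ref{prop:trap}.
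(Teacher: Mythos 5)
Your proof is correct and follows essentially the same route as the paper: the open case is read off from Proposition~\ref{prop:openspe}, and the closed case combines $H_w=K_w$ (from the definition of closed words) with the trapezoidal identities $|w|=L_w+H_w=R_w+K_w$. The extra care you take with the off-by-one dictionary between the parameters and the lengths of the longest repeated/special factors is sound and only makes explicit what the paper leaves implicit.
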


\begin{proof}
The claim for open trapezoidal words follows directly from Proposition \ref{prop:openspe}. 
 
Suppose that $w$ is a closed trapezoidal word. Then $H_{w}=K_{w}$ (since $w$ is closed) and therefore $L_{w}=R_{w}$ (since $w$ is trapezoidal).
\end{proof}

The latter proposition, however, does not characterize open and closed trapezoidal words. Actually, one can have  $H_{w}=K_{w}=L_{w}=R_{w}$, as is the case in the word $abba$, which is closed, and also in the word $aaba$, which is open.

Open trapezoidal words can be Sturmian (e.g.~$aaabaa$) or not (e.g.~$aaabab$). Closed trapezoidal words, instead, are always Sturmian, as shown in the following proposition, that can also be found in a paper of the first two authors together with Aldo de Luca (cf.~\cite{BuDelDel09}, Proposition 3.6). Here we report a proof following a totally different approach, based on Theorem \ref{theor:dal}.

\begin{proposition}\label{prop:cloStur}
Let $w$ be a trapezoidal word.  If $w$ is closed, then $w$ is Sturmian.
\end{proposition}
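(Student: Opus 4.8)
The plan is to prove the contrapositive: assuming $w$ is a non-Sturmian trapezoidal word, I would show that $w$ must be open. By Theorem \ref{theor:dal}, such a $w$ factors as $w = pq$ with $p \in \Suff(\{\tilde z_f\}^*)$ and $q \in \Pref(\{z_g\}^*)$, where $(f,g)$ is the pathological pair of minimal length. The key structural fact supplied by that same theorem is that the longest right special factor of $w$ is the longest proper prefix of $p$, which has length $R_w - 1$. My strategy is to use Proposition \ref{prop:openspe}: it suffices to show that the longest repeated prefix of $w$ coincides with the longest right special factor of $w$, i.e.\ that $h_w$ equals the prefix of $w$ of length $R_w-1$.

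First I would compute $R_w$. Since the longest right special factor of $w$ is the longest proper prefix of $p$ (by Theorem \ref{theor:dal}), its length is $|p|-1$, so $R_w = |p|$. Next I would identify the longest repeated prefix $h_w$, whose length is $H_w$. By Proposition \ref{prop:trap}, $w$ trapezoidal gives $|w| = R_w + K_w$, and Theorem \ref{theor:dal} gives $K_w = |q|$; hence $R_w = |w| - K_w = |p|$, consistent with the above. Now I would invoke the general identity $\max\{R_w, K_w\} = \max\{L_w, H_w\}$ recorded after Example \ref{ex:1}, together with the relations available for trapezoidal words, to pin down $H_w$ in terms of $R_w$. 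The cleanest route is to argue directly about the prefix structure: the prefix of $w$ of length $R_w - 1 = |p|-1$ is the longest proper prefix of $p$, and I would show this prefix is repeated in $w$ (so $H_w \ge R_w - 1$) while the prefix of length $R_w$ — namely $p$ itself — occurs only once (so $H_w \le R_w - 1$), forcing $H_w = R_w - 1$ and $h_w$ to be exactly the longest proper prefix of $p$.

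The heart of the argument, and the step I expect to be the main obstacle, is showing that $p$ itself has no internal occurrence and no second occurrence as a suffix in $w$, whereas its longest proper prefix does recur. For the recurrence of the prefix, I would exploit the periodic structure $p \in \Suff(\{\tilde z_f\}^*)$: the word $p$ has period $|\tilde z_f| = |z_f| = \pi_f$, so its prefixes shifted by this period recur, giving a second occurrence of the prefix of length $R_w-1$. For the uniqueness of the occurrence of $p$, I would use that $h_w$ being right special (it is the longest right special factor, the longest proper prefix of $p$, extended by two different letters) forces, via Proposition \ref{prop:openspe}, exactly the open condition — provided $h_w$ is genuinely the longest repeated prefix and not shorter.

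Once $h_w$ is identified with the longest right special factor of $w$, condition $(2)$ of Proposition \ref{prop:openspe} holds, and therefore $w$ is open by that proposition. This contradicts the hypothesis that $w$ is closed, completing the contrapositive. I would phrase the final write-up so that the periodicity computation of $p$ and the appeal to the explicit description of the longest right special factor from Theorem \ref{theor:dal} do the real work, with Proposition \ref{prop:openspe} providing the translation from ``longest repeated prefix is right special'' to ``open.''
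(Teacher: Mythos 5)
Your overall strategy (prove the contrapositive: every non-Sturmian trapezoidal word is open, via Proposition \ref{prop:openspe} and the explicit description of the longest right special factor in Theorem \ref{theor:dal}) is viable and genuinely different from the paper's proof, which argues by contradiction from closedness using the longest repeated \emph{suffix} $k_w$ and a case split on $R_w$ versus $K_w$. However, your argument has a real gap exactly at the step you yourself flag as the heart of the matter: showing that $p$, the prefix of $w$ of length $R_w$, occurs only once in $w$. The justification you offer --- that $h_w$ being right special forces the open condition via Proposition \ref{prop:openspe} --- is circular: that proposition applies only once you know the longest repeated prefix \emph{is} the longest right special factor, and that identification is precisely what the uniqueness of the occurrence of $p$ is supposed to establish. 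Nothing in your proposal independently rules out a second occurrence of $p$.

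To close the gap you need two separate arguments. If $R_w\ge K_w$, the identity $\max\{R_w,K_w\}=\max\{L_w,H_w\}$ gives $H_w\le R_w$, and since the prefix of length $R_w-1$ is right special (hence repeated) you get $H_w=R_w$, which is what you want. If $R_w<K_w$, that identity only yields $H_w\le K_w$, and you must argue directly: a second occurrence of $p$ would place a second occurrence of $f$ (which is a suffix of $p$, as $f$ is a palindrome and $\tilde p\in\Pref(\{z_f\}^*)$ with $|p|\ge|f|$) either entirely inside $q$ --- contradicting Lemma \ref{lem:pk}, since $q$ would then contain both $f$ and $g$ --- or overlapping the occurrence of $g$ at the start of $q$, contradicting Lemma \ref{lem:separation}. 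Neither argument appears in your proposal. Two smaller points: the periodicity argument you give for the recurrence of the prefix of length $R_w-1$ does not work as stated (shifting a prefix of length $|p|-1$ by $\pi_f$ pushes it past the end of $p$ unless $\pi_f=1$); it is easier to observe that this prefix is the longest right special factor and hence occurs at least twice. Also, with the paper's conventions the longest repeated prefix has length $H_w-1$, so the conclusion you are after is $H_w=R_w$, not $H_w=R_w-1$.
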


\begin{proof}
 Suppose that $w$ is not Sturmian. Then, by Theorem \ref{theor:dal}, $w=pq$, $p \in \Suff(\{\tilde{z}_{f}\}^{*})$, $q\in \Pref(\{z_{g}\}^{*})$, with $(f,g)$ being the pathological pair of $w$ of minimal length, $K_{w}=|q|$ (and hence $R_{w}=|p|$ since $w$ is trapezoidal) and the longest right special factor of $w$ is the prefix of $w$ of length $R_{w}-1$. 
 
Let $k_{w}$ be the suffix of $q$ of length $K_{w}-1$, that is, the longest repeated suffix of $w$. Since $w$ is closed, the only other occurrence of $k_{w}$ in $w$ is as a prefix of $w$.
 
  
If $R_{w}\ge K_{w}$, then $k_{w}$ is a prefix of the longest right special factor of $w$. This is a contradiction with the fact that $k_{w}$ does not have internal occurrences in $w$.  
  
If $R_{w}<K_{w}$, then $p$ is a prefix of $k$ and hence $p$ is a factor of $q$. This implies that $f$ is a factor of $q$, and therefore $q$ contains both $f$ and $g$ as factors. Hence $q$ would be non-Sturmian, contradicting Lemma \ref{lem:pk}.
\end{proof}

As a corollary of Proposition \ref{prop:cloStur}, we have that every trapezoidal word is open or Sturmian. 
%
%
%
We now focus on closed trapezoidal words and their special factors.

\begin{lemma}\label{lem:clo1}
Let $w$ be a closed trapezoidal word and $u$ the longest left special factor of $w$. Then $u$ is also the longest right special factor of $w$ (and thus $u$ is a bispecial factor of $w$). Moreover, $u$ is a central word.  
\end{lemma}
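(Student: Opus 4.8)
The plan is to prove the claim in two stages: first show that the longest left special factor $u$ coincides with the longest right special factor, then show that this common factor is central. For the first stage I would exploit the closedness of $w$ together with the trapezoidal structure. Since $w$ is closed and trapezoidal, Proposition~\ref{prop:HRKL} gives $L_w=R_w$, which already tells me that the length of the longest left special factor equals the length of the longest right special factor; call this common value $\ell$ (so $|u|=\ell-1$, since $L_w$ is the minimal length \emph{without} left special factors). The real content is that the \emph{actual} longest left special factor and the \emph{actual} longest right special factor are the same word, not merely of the same length. Here I would use the closedness characterization: by Proposition~\ref{prop:cloStur}, $w$ is Sturmian, and by Proposition~\ref{prop:openspe} (contrapositive), since $w$ is \emph{not} open, the longest repeated prefix $h_w$ is strictly shorter than the longest right special factor, and symmetrically the longest repeated suffix $k_w$ is strictly shorter than the longest left special factor.

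For the coincidence of $u$ with the longest right special factor, I would argue via the bispeciality structure of Sturmian words. In a trapezoidal word there is at most one left special and at most one right special factor of each length (Proposition~\ref{prop:trap}(4)--(5)), and the left special factors form a chain under the suffix order while the right special factors form a chain under the prefix order. Let $r$ denote the longest right special factor; I want $u=r$. I would show that $u$ is right special and $r$ is left special, forcing both to be the unique bispecial factor of maximal length. The key mechanism: since $w$ is closed, its longest border $b$ occurs only as a prefix and a suffix, with $H_w=K_w=|b|+1$. The longest left special factor $u$ is a suffix of $w$ (being the longest repeated suffix, which here equals the longest left special factor because $w$ is closed—this needs the relation $K_w=L_w$ combined with $L_w=R_w$ and $H_w=K_w$, forcing $H_w=K_w=L_w=R_w$). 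From $H_w=K_w=L_w=R_w$ I can identify $u$ with the longest repeated suffix and also with a prefix of the longest right special factor, which by the length equality must be all of it.

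For the second stage, that $u$ is central, I would invoke Proposition~\ref{prop:centrbis}: a word is central if and only if it is a bispecial factor of some Sturmian word. Having established in stage one that $u$ is bispecial in $w$, and knowing from Proposition~\ref{prop:cloStur} that $w$ is Sturmian, hence a factor of some infinite Sturmian word $s$, I would need $u$ to be bispecial in $s$ itself (or in some Sturmian word). The cleanest route is to check that $u$ is a bispecial factor of a Sturmian word directly, then appeal to Proposition~\ref{prop:centrbis}. Alternatively, I would use Proposition~\ref{prop:newcent}: verify that $u$ is a palindrome and that both $aua$ and $bub$ are Sturmian. Palindromicity of $u$ should follow from richness of $w$ (Proposition~\ref{prop:traprich}) together with the fact that the longest repeated prefix of a closed word equals its longest border, which in a rich word is forced to be palindromic.

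The main obstacle I anticipate is stage one: rigorously upgrading the \emph{length} equality $L_w=R_w$ to the \emph{word} equality $u=r$. The danger is that there could be a long right special factor unrelated to $u$; ruling this out requires carefully tracking how the single chain of left special factors and the single chain of right special factors interact through the border of the closed word. I expect to lean heavily on the fact that in a closed word the longest repeated prefix and longest repeated suffix coincide with the longest border, pinning down a single factor that must simultaneously be the extremal left special and right special factor.
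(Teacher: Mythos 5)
Your stage one contains a genuine error that breaks the argument. You claim that for a closed trapezoidal word the relations $K_w=L_w$ and $H_w=R_w$ hold alongside $H_w=K_w$ and $L_w=R_w$, ``forcing $H_w=K_w=L_w=R_w$'', and you then identify the longest left special factor $u$ with the longest repeated suffix. Both claims are false: Proposition~\ref{prop:HRKL} gives $H_w=R_w$ and $K_w=L_w$ only in the \emph{open} case, and by Proposition~\ref{prop:openspe} the longest repeated suffix coincides with the longest left special factor precisely when $w$ is open --- for a closed word the longest repeated suffix is, by the very characterization of closedness, \emph{not} left special. A concrete counterexample is the paper's own example $w=aababaaba$: here $H_w=K_w=5$ while $L_w=R_w=4$, the longest repeated suffix is $aaba$ and the longest left special factor is $aba$. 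Your parenthetical claim that the longest repeated prefix is strictly shorter than the longest right special factor is likewise wrong (in the same example $h_w=aaba$ is \emph{longer} than $r_w=aba$); the contrapositive of Proposition~\ref{prop:openspe} only yields $h_w\neq r_w$. Since your entire mechanism for upgrading the length equality $L_w=R_w$ to the word equality $u=r_w$ rests on these identifications, the main step --- which you yourself flag as the chief obstacle --- is not established.

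The repair is to prove directly that $u$ is right special, which is what the paper does. Write $au,bu\in\Fact(w)$ with $a\neq b$ and show that both occurrences extend to the right: if, say, $au$ occurred only as a suffix of $w$, then $u$ would be a repeated suffix, so the longest repeated suffix $k_w$ satisfies $|k_w|\geq|u|$; equality would make $u=k_w$ occur only as a prefix and a suffix of $w$ (closedness), contradicting left speciality, while $|k_w|>|u|$ would place $au$ as a suffix of $k_w$, whose prefix occurrence in $w$ yields a non-suffix occurrence of $au$, again a contradiction. Hence $aux,buy\in\Fact(w)$ for some letters $x,y$, and $x\neq y$ by maximality of $u$, so $u$ is right special; now $L_w=R_w$ together with the uniqueness of the right special factor of each length (Proposition~\ref{prop:trap}) makes $u$ the longest one. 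Your stage two is essentially correct and matches the paper: $w$ is Sturmian by Proposition~\ref{prop:cloStur} and a bispecial factor of a Sturmian word is central by Proposition~\ref{prop:centrbis}; the detour through Proposition~\ref{prop:newcent} and richness is unnecessary.
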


\begin{proof}
Let $u$ be the longest left special factor of $w$, so that there exist different letters $a,b\in \Sigma$ such that $au$ and $bu$ are factors of $w$. 

We claim that $au$ and $bu$ both occur in $w$ followed by some letter. Indeed, suppose the contrary. Then one between $au$ and $bu$, say $au$, appears in $w$ only as a suffix. Let $k_{w}$ be the longest repeated suffix of $w$. Since $u$ is a repeated suffix of $w$, we have $|k_{w}|\ge |u|$. If $|k_{w}|=|u|$, then $k_{w}=u$ and since $w$ is closed, $u$ appears in $w$ only as a prefix and as a suffix, against the hypothesis that $bu$ is a factor of $w$. So $|k_{w}|>|u|$ and therefore $au$ must be a suffix of $k_{w}$. This implies an internal occurrence of $au$ in $w$, a contradiction.

So, there exist letters $x,y$ such that $aux$ and $buy$ are factors of $w$. Now, we must have $x\neq y$, since otherwise $ux$ would be a left special factor of $w$ longer than $u$. Thus $u$ is right special in $w$. Since $w$ is closed, we have, by Proposition \ref{prop:HRKL}, $L_{w}=R_{w}$, and therefore $u$ is the longest right special factor of $w$.

By Proposition \ref{prop:cloStur}, $w$ is Sturmian. Since $u$ is a bispecial factor of a Sturmian word, we have by Proposition \ref{prop:centrbis} that $u$ is a central word.
\end{proof}

\begin{example}
Let $w=aababaaba$. The longest repeated prefix of $w$ is $aaba$, which is also the longest repeated suffix and does not have internal occurrences, hence $w$ is a closed trapezoidal word.

The longest left special factor of $w$ is $aba$, which is also the longest right special factor of $w$ and it is a central word.
\end{example}

By Theorem \ref{theor:trappal}, trapezoidal palindromes coincide with Sturmian palindromes, so the next results will be stated in terms of Sturmian palindromes but one can replace the word ``Sturmian'' with the word ``trapezoidal''. 

Some results on Sturmian palindromes can be found in \cite{DelDel06a}. In particular, we recall here the following one.

\begin{theorem}[\cite{DelDel06a}, Theorem 29]\label{theor:delucas}
 A palindrome $w\in \Sigma^{*}$ is Sturmian if and only if its minimal period $\pi(w)$ satisfies $\pi_{w}=R_{w}+1$.
\end{theorem}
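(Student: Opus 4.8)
The plan is to prove the equivalence by characterizing Sturmian palindromes through the behavior of their complexity function at the point where right special factors disappear, and then relate this to the minimal period via the structure of central words.

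First I would recall that a palindrome $w$ is Sturmian if and only if it is balanced, and use the fact that $R_w$ marks the length at which $w$ stops having right special factors. For the forward direction, suppose $w$ is a Sturmian palindrome. By Proposition \ref{prop:sturmtrap} it is trapezoidal, so it has at most one right special factor of each length (Proposition \ref{prop:trap}). The longest right special factor $r$ has length $R_w - 1$; I would argue that since $w$ is a palindrome, $\tilde{r}$ is the longest left special factor, and by Lemma \ref{lem:clo1} (if $w$ is closed) the longest special factor is a central word $u$ with $|u| = R_w - 1$. The key is to connect the period: for a central word $u$ embedded as the bispecial factor, the minimal period of $w$ should be exactly $|u| + 2 = R_w + 1$, because the palindrome $w$ is essentially a complete return to $u$ built from the two periods of $u$. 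I would make this precise using Proposition \ref{prop:ipcent}, which gives $u = u'ab v' = v' ba u'$ and the coprime-period structure of central words.

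For the converse, I would assume $w$ is a palindrome with $\pi_w = R_w + 1$ and show it must be balanced. The idea is that if $w$ were non-Sturmian, then by Theorem \ref{theor:dal} it factors as $w = pq$ with a pathological pair $(f,g)$, and the longest right special factor would be the longest proper prefix of $p$, giving $R_w = |p|$. I would then compute the minimal period of such a $w$ and show it cannot equal $R_w + 1$, because the repeated-root structure of $p \in \Suff(\{\tilde{z}_f\}^*)$ and $q \in \Pref(\{z_g\}^*)$ forces a period strictly larger (or the palindrome condition fails entirely, as in the proof of Theorem \ref{theor:trappal}, since a non-Sturmian palindrome is impossible). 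In fact, since Theorem \ref{theor:trappal} already establishes that trapezoidal palindromes are Sturmian, I would first need to know that the condition $\pi_w = R_w + 1$ forces $w$ to be trapezoidal, or argue directly that any palindrome failing balance cannot satisfy the period equation.

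The main obstacle I anticipate is establishing the exact relationship $\pi_w = R_w + 1$ for Sturmian palindromes, since this requires understanding how the minimal period of a Sturmian palindrome is governed by its longest special factor rather than by the obvious period. The subtlety is that $R_w - 1$ is the length of the longest (bi)special factor, which is central, but the period of $w$ is determined by how far beyond this central word the palindrome extends. I would expect to need a careful case analysis using Proposition \ref{prop:centrbis} and the characterization of central words, possibly invoking that for a Sturmian palindrome the longest repeated prefix (= longest border) has length $|w| - \pi_w$, and relating this border to the special factors via the trapezoidal identity $|w| = R_w + K_w$. Balancing the bookkeeping between $H_w, K_w, R_w, L_w$ and $\pi_w$ is where the real work lies.
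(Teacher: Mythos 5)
The paper does not actually prove this statement: it is imported verbatim from de Luca and De Luca (\cite{DelDel06a}, Theorem 29) and used as a black box, so there is no internal proof to compare against. Judged on its own, your proposal is a plan rather than a proof, and both directions have genuine gaps. In the forward direction, the central claim --- that the minimal period of a Sturmian palindrome equals $|u|+2=R_w+1$ where $u$ is the longest bispecial factor --- is asserted (``should be exactly'') but never established; the honest route, which you mention only as a parenthetical afterthought, is via the standard identity $\pi_w=|w|-|\text{longest border}|$ combined with the facts that for a \emph{closed} trapezoidal word the longest border is the longest repeated prefix, that $H_w=K_w$ (Proposition \ref{prop:HRKL}), and that $|w|=R_w+K_w$ (Proposition \ref{prop:trap}); carried out, this gives $\pi_w=|w|-(K_w-1)=R_w+1$ in two lines. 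There is also a circularity hazard you do not address: within this paper, the closedness of Sturmian palindromes (Theorem \ref{theor:palclo}) is itself \emph{deduced from} Theorem \ref{theor:delucas}, so you cannot invoke it here; you would have to use the independent richness argument of Remark \ref{rem:richpal} (rich palindromes are closed via Proposition \ref{prop:compret}) instead.

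The converse is where the proposal genuinely breaks down. Your strategy is to apply Theorem \ref{theor:dal} to a hypothetical non-Sturmian palindrome, but that theorem only describes non-Sturmian words that are already known to be \emph{trapezoidal}, and nothing in the hypothesis $\pi_w=R_w+1$ gives you trapezoidality a priori. You flag this yourself (``I would first need to know that the condition forces $w$ to be trapezoidal, or argue directly\ldots'') but offer no way to close the hole, and the alternative you gesture at --- that a non-Sturmian trapezoidal palindrome is impossible by Theorem \ref{theor:trappal} --- only disposes of the trapezoidal case. A correct converse needs a direct argument that an unbalanced palindrome cannot satisfy the period equation (for instance, via the pathological pair $(aua,bub)$ forcing a right special factor of length at least $|u|+1$ while bounding the longest border from below), and none of that is present. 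As it stands, the proposal correctly inventories the relevant tools but proves neither implication.
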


The next theorem shows that Sturmian palindromes, and so, by Theorem \ref{theor:trappal}, trapezoidal palindromes, are all closed words. 

\begin{theorem}\label{theor:palclo}
 Let $w$ be a Sturmian palindrome. Then $w$ is closed.
\end{theorem}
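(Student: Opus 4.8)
The plan is to show that a Sturmian palindrome $w$ is the complete return to its longest repeated prefix, which by the remark following Definition~\ref{def:closed} is equivalent to being closed. First I would dispose of trivial cases: if $w$ is a power of a single letter, then it is manifestly closed (its longest border occurs only as a prefix and suffix), so I may assume $|\alph(w)|=2$. The key structural fact I intend to exploit is Theorem~\ref{theor:delucas}, which tells us that the minimal period of the Sturmian palindrome $w$ satisfies $\pi_{w}=R_{w}+1$. Combining this with the trapezoidality of $w$ (Proposition~\ref{prop:sturmtrap}) and the characterization $|w|=R_{w}+K_{w}$ from Proposition~\ref{prop:trap}, I can pin down the length of the fractional root $z_w$ and relate it to the longest border.

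Concretely, I would argue as follows. Let $r_w$ denote the longest repeated prefix of $w$, whose length is $R_w$ by the standard relationship between repeated prefixes and right special factors on trapezoidal words; more precisely, on a trapezoidal word the longest repeated prefix and the longest right special factor interact through Proposition~\ref{prop:trap}, and for a palindrome the situation is symmetric. The fractional root $z_w$ has length $\pi_w = R_w+1$. The idea is that the border of $w$ of length $|w|-\pi_w = K_w - 1$ is exactly the longest repeated prefix (equivalently, longest border) of $w$, and I want to show it has no internal occurrences. Because $w$ is a palindrome, its longest repeated prefix and longest repeated suffix are reversals of one another and hence equal in length, so $H_w = K_w$; by Proposition~\ref{prop:HRKL}'s defining dichotomy this already signals closedness, but since that proposition is not an ``if and only if'', I must instead directly verify the no-internal-occurrence condition.

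The cleanest route is to suppose for contradiction that $w$ is open and derive a violation of the period formula. If $w$ were open, then by Proposition~\ref{prop:openspe} the longest repeated prefix $r_w$ would coincide with the longest right special factor, giving $H_w = R_w$. Using the palindrome symmetry ($H_w=K_w$ and $L_w=R_w$) together with $|w|=R_w+K_w$, I would obtain numerical constraints on $R_w$ and $K_w$; the presence of a right special factor of length $R_w-1$ equal to $r_w$ lets me locate two occurrences of $r_w$ followed by distinct letters, and by reversing (using that $w$ is a palindrome) I get the mirror statement on the suffix side. The aim is to manufacture a period of $w$ strictly smaller than $R_w+1$, contradicting Theorem~\ref{theor:delucas}, or else to contradict the minimality of $\pi_w$ directly.

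The main obstacle I anticipate is the bookkeeping that links the repeated-prefix length, the right-special-factor length, and the period through the palindromic symmetry; one must be careful that the longest border and the longest repeated prefix genuinely coincide (they do for trapezoidal words by the monotonicity in Proposition~\ref{prop:del}, but this needs to be stated cleanly) and that reversing occurrences respects the distinct-letter extensions. A tempting shortcut would be to invoke the central-word description: by Proposition~\ref{prop:newcent} a Sturmian palindrome is, up to the outermost structure, built from a central word $u$ satisfying $w=aua$ or $w=bub$ when $w$ is itself obtained by extending a central word, and central words are themselves closed by Proposition~\ref{prop:ipcent}. I would check whether every Sturmian palindrome is either a central word or of the form $xux$ with $u$ central, since then closedness of $u$ would propagate to $w$; verifying that this covers all Sturmian palindromes, and that the border survives the outer extension without acquiring an internal occurrence, is where I expect to spend the most care.
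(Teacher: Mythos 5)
Your setup matches the paper's: assume $w$ open, use Proposition~\ref{prop:openspe} and the palindrome symmetry to get $H_{w}=K_{w}=R_{w}=L_{w}=|w|/2$, and invoke Theorem~\ref{theor:delucas} to pin down $\pi_{w}=R_{w}+1$, which forces $w=h_{w}xxh_{w}$ with $h_{w}$ a palindrome. The gap is in your endgame. You plan to ``manufacture a period of $w$ strictly smaller than $R_{w}+1$,'' but no such period exists to be manufactured: the structure $w=h_{w}xxh_{w}$ is perfectly consistent with $\pi_{w}=R_{w}+1$, and indeed the period formula has already been spent in deriving that structure (it is what gives $\tilde{h}_{w}=h_{w}$). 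The contradiction cannot come from Theorem~\ref{theor:delucas} a second time; it has to come from balancedness. The paper's decisive step, which you are missing, is this: since $h_{w}$ is right special, $h_{w}y$ occurs in $w$ for the letter $y\neq x$, and this occurrence cannot be preceded by $x$ (otherwise $xh_{w}$, the unrepeated suffix of length $K_{w}$, would occur twice), so $yh_{w}y$ is a factor of $w$. Then $h_{w}xx$ and $yh_{w}y$ have the same length and their numbers of $y$'s differ by $2$, a pathological pair contradicting the fact that $w$ is Sturmian. Without some argument of this kind your proof does not close.

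Your fallback route also fails: it is not true that every Sturmian palindrome is a central word or of the form $xux$ with $u$ central. For example, $w=abaaaba$ is a Sturmian palindrome, it is not central (its periods are $4$, $6$ and $7$, and no two coprime periods sum to $|w|+2=9$), and $w=a\cdot baaab\cdot a$ where $baaab$ has periods $4$ and $5$ with $4+5-2=7\neq 5$, so $baaab$ is not central either. So the class you hoped would cover all Sturmian palindromes does not, and closedness cannot be propagated from central words in the way you suggest. You were right to flag this step as the one needing the most care; it is in fact where that route breaks down.
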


\begin{proof}
By contradiction, suppose that $w$ is open. Then $h_{w}$, the longest repeated prefix of $w$,  is also the longest right special factor of $w$ (Proposition \ref{prop:openspe}). Since $w$ is a palindrome, we have that the longest repeated suffix of $w$ is $\tilde{h}_{w}$, the reversal of $h_{w}$. In particular, then, $K_{w}=H_{w}$.

By Proposition \ref{prop:HRKL}, $H_{w}=R_{w}$ and $K_{w}=L_{w}$. Thus we have $H_{w}=R_{w}=K_{w}=L_{w}=|w|/2$, since $w$ is trapezoidal (see Proposition \ref{prop:trap}). It follows that $w=h_{w}xx\tilde{h}_{w}$, for a letter $x\in \Sigma$. By Theorem \ref{theor:delucas}, the period of $w$ is $R_{w}+1=|h_{w}xx|$, so that $\tilde{h}_{w}=h_{w}$.  Therefore, we have $w=h_{w}xxh_{w}$. 

Since $h_{w}$ is right special in $w$, there exists a letter $y\neq x$ such that $h_{w}y$ is a factor of $w$. Now, any occurrence of $h_{w}y$ in $w$ cannot be preceded by the letter $x$, since $h_{w}=\tilde{h}_{w}$ is the longest repeated suffix of $w$. Thus $w$ contains the factor $yh_{w}y$. 

Hence, $w$ contains both $h_{w}xx$ and $yh_{w}y$ as factors, and this contradicts the fact that $w$ is Sturmian.
\end{proof}

\begin{figure}[ht]
\begin{center}
\includegraphics[height=7cm]{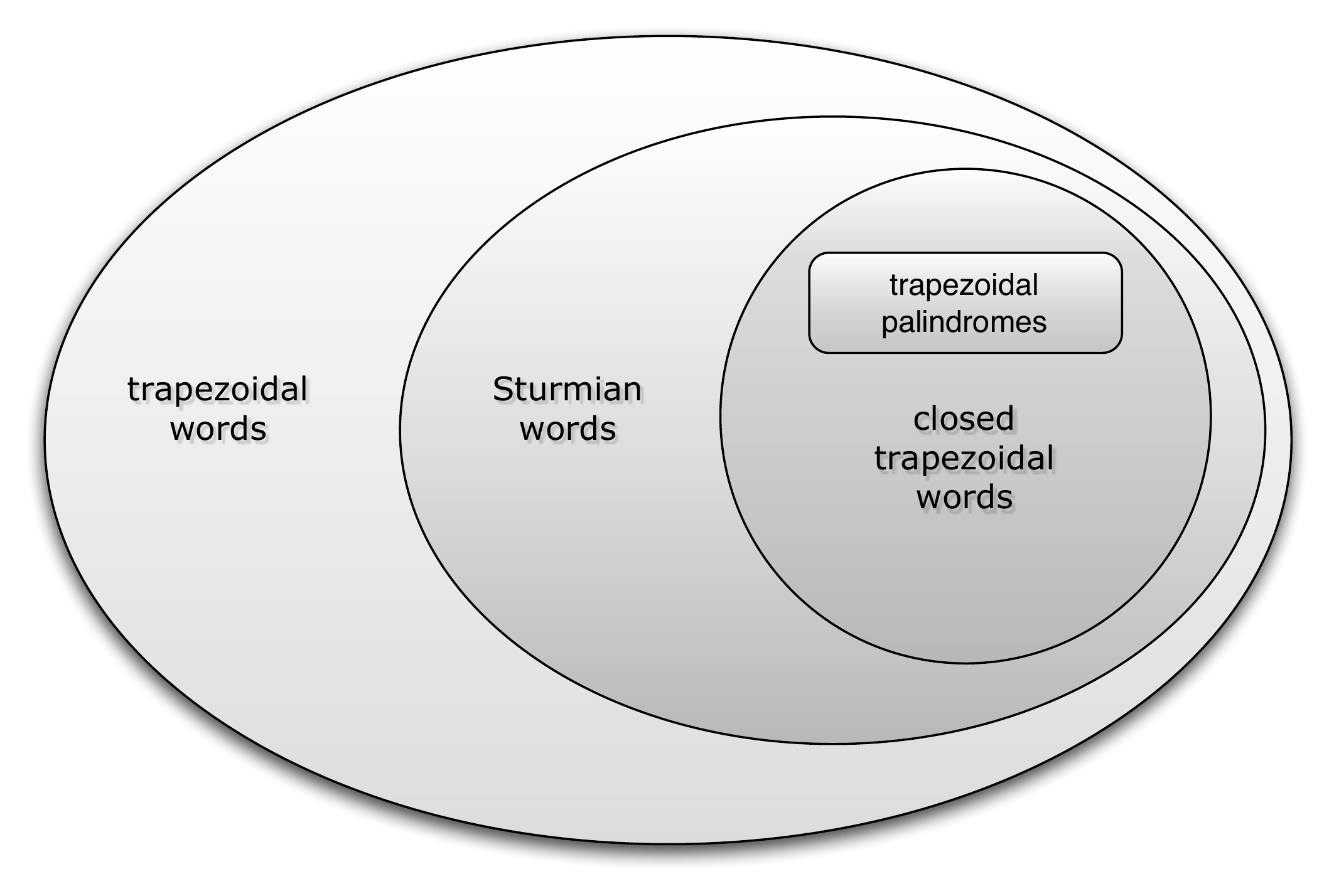}
\caption{A Venn diagram illustrating the inclusion relations of trapezoidal words.}
\label{fig:insiemi}
\end{center}
\end{figure}

From Theorem \ref{theor:palclo} and Lemma \ref{lem:clo1}, we derive the following characterization of the special factors of Sturmian palindromes.

\begin{corollary}\label{cor:sturpal}
 Let $w$ be a Sturmian palindrome. Then the longest left special factor of $w$ is also the longest right special factor of $w$ and it is a central word.
\end{corollary}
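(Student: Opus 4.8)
The plan is to observe that this statement is an immediate consequence of the two results cited just above it, so essentially no new work is required. First I would note that a Sturmian palindrome is in particular a Sturmian word, hence trapezoidal by Proposition \ref{prop:sturmtrap}. By Theorem \ref{theor:palclo} it is moreover closed. Therefore $w$ is a closed trapezoidal word, and Lemma \ref{lem:clo1} applies verbatim: the longest left special factor of $w$ coincides with its longest right special factor, and this common factor is a central word. This is exactly the assertion of the corollary.

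The only point to verify is that the hypotheses of Lemma \ref{lem:clo1} are genuinely met, namely that $w$ is simultaneously closed and trapezoidal; but this is precisely what the preceding two results supply. Consequently there is no real obstacle here: the combinatorial substance was already extracted in Theorem \ref{theor:palclo} (Sturmian palindromes are closed, via the balance argument that forbids the coexistence of $h_{w}xx$ and $yh_{w}y$ as factors) and in Lemma \ref{lem:clo1} (for a closed trapezoidal word the longest left and right special factors coincide, by Proposition \ref{prop:HRKL}, and form a central word, by Proposition \ref{prop:centrbis}).

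Thus my proof would be a two-line deduction chaining the implications: Sturmian palindrome $\Rightarrow$ trapezoidal and closed $\Rightarrow$ the special-factor statement. If a more self-contained presentation were desired, one could inline the relevant steps of Lemma \ref{lem:clo1}, but since both ingredients are already established the cleanest approach is simply to invoke Theorem \ref{theor:palclo} followed by Lemma \ref{lem:clo1}.
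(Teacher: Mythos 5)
Your proposal is correct and matches the paper's own derivation exactly: the corollary is obtained by combining Theorem~\ref{theor:palclo} (Sturmian palindromes are closed) with Proposition~\ref{prop:sturmtrap} and Lemma~\ref{lem:clo1}. No gaps.
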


Let us recall the following characterization of rich words from~\cite{GlJuWiZa09}:

\begin{proposition}\label{prop:compret}
A word $w$ is rich if and only if every factor of $w$ which is a complete return to a palindrome is palindromic itself.
\end{proposition}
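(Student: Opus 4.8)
The plan is to reduce everything to the classical characterization of richness via longest palindromic suffixes: a word $v$ is rich if and only if, for every prefix $p$ of $v$, the longest palindromic suffix of $p$ is unioccurrent in $p$ (i.e.\ occurs only as a suffix of $p$). This is the standard tool underlying the bound $|v|+1$ on the number of palindromic factors, since appending one letter to a word creates at most one new palindromic factor---namely the longest palindromic suffix of the resulting prefix---and a new one appears exactly when that longest palindromic suffix does not already occur earlier. I would state this lemma, together with the fact that every factor of a rich word is rich (which comes out of the same analysis), as known results from \cite{GlJuWiZa09}, and then build both implications on top of it.

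For the direction \textbf{rich} $\Rightarrow$ \emph{every complete return to a palindrome is a palindrome}, let $w$ be rich and let $r$ be a complete return to a palindrome $u$, so that $u$ occurs in $r$ exactly twice, as a prefix and as a suffix. Since $r\in\Fact(w)$, the word $r$ is itself rich, so its longest palindromic suffix $q$ is unioccurrent in $r$. As $u$ is a palindromic suffix of $r$, we have $|u|\le|q|$, hence $u$ is a suffix of $q$; because $q=\tilde{q}$, the word $u=\tilde{u}$ is also a prefix of $q$, and therefore $u$ occurs in $r$ at the position where $q$ begins. Since $u$ occurs in $r$ only as a prefix and as a suffix, this forces either $|q|=|u|$ or $|q|=|r|$. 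The first case gives $q=u$, which contradicts the unioccurrence of $q$ (as $u$ occurs twice in $r$); hence $|q|=|r|$, that is $q=r$, so $r$ is a palindrome.

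For the converse, assume every factor of $w$ that is a complete return to a palindrome is itself a palindrome, and suppose for contradiction that $w$ is not rich. By the characterization above there is a prefix $p$ of $w$ whose longest palindromic suffix $q$ occurs at least twice in $p$. Let $s$ be the starting position of the rightmost occurrence of $q$ in $p$ other than the suffix occurrence, and let $c$ be the suffix of $p$ beginning at position $s$. By the choice of $s$, the only occurrences of $q$ in $c$ are as its prefix and as its suffix, so $c$ is a complete return to the palindrome $q$; by hypothesis $c$ is a palindrome. But $c$ is a suffix of $p$ with $|c|>|q|$, contradicting the maximality of $q$ as the longest palindromic suffix of $p$. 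Hence $w$ is rich.

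The main obstacle is conceptual rather than computational: everything rests on the longest-palindromic-suffix characterization of richness and on heredity (factors of rich words are rich), which must be in place before the two short arguments can run. The only points needing care are checking that the factor $c$ constructed in the converse is a genuine complete return---exactly two occurrences of $q$, guaranteed precisely by selecting the rightmost preceding occurrence---and that $|c|>|q|$ holds strictly, so that $c$ really does violate the maximality of $q$.
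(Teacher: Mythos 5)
Your argument is correct, but note that the paper does not prove this proposition at all: it is recalled as a known result from the cited reference of Glen, Justin, Widmer and Zamboni, so there is no in-paper proof to compare against. What you have written is essentially a faithful reconstruction of the standard argument from that reference, resting on the two classical facts you isolate (heredity of richness under taking factors, and the characterization of richness by unioccurrence of the longest palindromic suffix of every prefix); both directions are carried out correctly, including the two delicate points you flag --- that the occurrence of $u$ at the start of $q$ must coincide with the prefix or the suffix occurrence in the forward direction, and that choosing the \emph{rightmost} earlier occurrence of $q$ makes $c$ a genuine complete return with $|c|>|q|$ in the converse. As long as the two supporting facts are themselves taken as known (they are proved in the cited literature), your proof is complete.
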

We have the following result:

\begin{proposition}\label{prop:palh}
 Let $w$ be a closed trapezoidal word and $h_{w}$ its longest repeated prefix. Then $w$ is a Sturmian palindrome if and only if $h_{w}$ is a palindrome.
\end{proposition}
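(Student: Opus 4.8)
The plan is to prove the two implications separately, handling the forward direction by a direct reversal argument and the backward direction via the richness of trapezoidal words.

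For the forward implication, I would assume $w$ is a Sturmian palindrome and deduce that $h_w$ is a palindrome. The key observation is that, since $w$ is closed, its longest repeated prefix $h_w$ coincides with its longest repeated suffix: both are the longest border of $w$, which has no internal occurrence. Now reversal maps the longest repeated prefix of $w$ to the longest repeated suffix of $\tilde w$, and since $w=\tilde w$ this longest repeated suffix is again the longest repeated suffix of $w$, namely $h_w$. Hence $\tilde h_w=h_w$, so $h_w$ is a palindrome. This direction uses nothing beyond closedness and the palindromicity of $w$.

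For the backward implication, I would assume $h_w$ is a palindrome and conclude that $w$ is a Sturmian palindrome. Since $w$ is a closed trapezoidal word, it is already Sturmian by Proposition~\ref{prop:cloStur} and rich by Proposition~\ref{prop:traprich}; it therefore suffices to show that $w$ is a palindrome. The crucial structural fact is that, being closed, $w$ is exactly the complete return to its longest repeated prefix $h_w$: the factor $h_w$ occurs in $w$ precisely twice, once as a prefix and once as a suffix, with no internal occurrence. Regarding $w$ as a factor of itself, $w$ is thus a complete return to the palindrome $h_w$. Applying Proposition~\ref{prop:compret}, which asserts that in a rich word every factor that is a complete return to a palindrome is itself a palindrome, we obtain that $w$ is a palindrome, which finishes the argument.

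The step requiring the most care — and where a more pedestrian approach would bog down — is the backward direction. A direct attempt to establish palindromicity through the Sturmian/central structure of $w$ (e.g.\ via the central word supplied by Lemma~\ref{lem:clo1}, or the factorization of Theorem~\ref{theor:dal}) is possible but cumbersome. The clean route is to recognize that closedness is precisely the statement that $w$ is a complete return to $h_w$, so that once $h_w$ is known to be a palindrome, the richness of trapezoidal words forces $w$ to be a palindrome at once. The only points needing verification are the identification of $h_w$ as the unique longest border of $w$ having no internal occurrence, and the observation that $w$ itself is the complete return to $h_w$ to which Proposition~\ref{prop:compret} is applied.
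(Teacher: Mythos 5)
Your proof is correct and follows essentially the same route as the paper: the backward direction is exactly the paper's argument (closedness makes $w$ the complete return to $h_w$, richness of trapezoidal words plus Proposition~\ref{prop:compret} forces $w$ to be a palindrome, and Theorem~\ref{theor:trappal} upgrades this to Sturmian), while the forward direction via reversal of the longest border is the step the paper leaves implicit. Your write-up is simply a more explicit version of the paper's proof.
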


\begin{proof}
The word $w$ is a complete return to $h_{w}$, and it is rich by Proposition~\ref{prop:traprich}. Since trapezoidal palindromes coincide with Sturmian palindromes by Theorem \ref{theor:trappal}, the preceding proposition implies the assertion.
\end{proof}

The general structure of a closed trapezoidal word is depicted in \figurename~\ref{fig:clotrap}.

\begin{figure}[ht]
\begin{center}
\includegraphics[height=1cm]{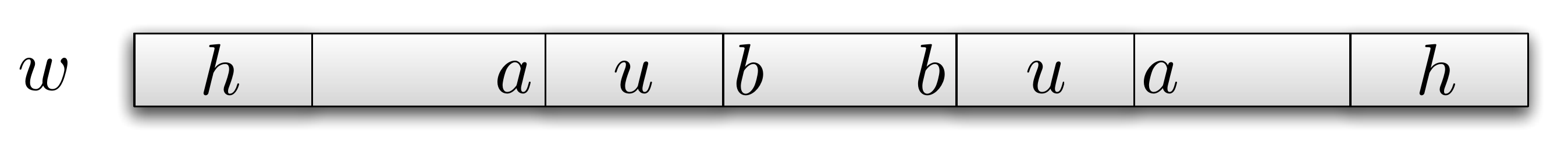}
\caption{The structure of a closed trapezoidal word. The longest repeated prefix, $h_{w}$, is also the longest repeated suffix and does not have internal occurrences. The longest left special factor, $u$, is also the the longest right special factor and it is a central word (Lemma \ref{lem:clo1}). The word is a palindrome if and only if $h_{w}$ is a palindrome (Proposition \ref{prop:palh}).}
\label{fig:clotrap}
\end{center}
\end{figure}

\begin{remark}\label{rem:richpal}
Using Proposition \ref{prop:compret}, one can prove that Theorem \ref{theor:palclo} can be generalized to rich palindromes. Indeed, let $w$ be a rich palindrome, and $u$ the longest border of $w$. In order to prove that $w$ is closed, it is sufficient to prove that $u$ does not have internal occurrences in $w$. Suppose the contrary. Then $w$ would contain a proper prefix of the form $uzu$. By Proposition \ref{prop:compret}, $uzu$ is a palindrome. Hence, $uzu$ would be a suffix of $w$, against the maximality of $u$.

Notice that the converse is not true. The word $w=aaababbaabbabaaa$ is a closed palindrome of length $16$ but contains only $16$ palindromic factors, so it is not rich. 
\end{remark}

\section{Semicentral words}\label{sec:sesqui}


In what follows, we let by $h_{w},k_{w},r_{w},l_{w}$ respectively denote the longest repeated prefix, the longest repeated suffix, the longest right special factor and the longest left special factor of the word $w$. An intriguing class of trapezoidal words is that of words in which all these factors coincide.

\begin{definition}
 Let $w$ be a trapezoidal word. We say that $w$ is \emph{semicentral} if $h_{w}=k_{w}=r_{w}=l_{w}$.
\end{definition}

Notice that, by Proposition \ref{prop:openspe}, semicentral words can also be defined as those open trapezoidal words in which the longest repeated prefix is also the longest repeated suffix.

The following theorem provides a characterization of semicentral words.

\begin{theorem}\label{theor:sesqui}
A trapezoidal word $w$ is semicentral if and only if there exists a central word $u$ such that $w=uxyu$ with $x,y\in\Sigma$, $x\neq y$.
\end{theorem}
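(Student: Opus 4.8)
The plan is to prove both directions of the biconditional, treating the ``if'' direction as a direct verification and the ``only if'' direction as the substantive part where I invoke the structural results already established.

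First I would prove the easier implication: assume $w=uxyu$ with $u$ central and $x\neq y$, and show $w$ is semicentral. Here I need to check two things: that $w$ is trapezoidal, and that $h_w=k_w=r_w=l_w=u$. For trapezoidality, note that $w$ is non-Sturmian (it contains both $xux$ and $yuy$ as factors, which form a pathological pair since $u$ is the central root), so I would verify the form \eqref{eq:dal} of Theorem \ref{theor:dal}: with $f=xux$, $g=yuy$, one has $\tilde z_f$ and $z_g$ determined by the periods, and I would check that the prefix $ux$ of $w$ lies in $\Suff(\{\tilde z_f\}^*)$ while the suffix $yu$ lies in $\Pref(\{z_g\}^*)$, using that central words have two coprime periods (Proposition \ref{prop:ipcent}) so that the relevant roots behave correctly. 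Once trapezoidality is established, I identify the four special/repeated factors. The prefix $u$ reoccurs as a suffix (and these are its only occurrences, so $h_w=k_w=u$), and $u$ is both left special (preceded by $x$ as a prefix-neighbour and by $y$ internally) and right special (followed by $x$ and by $y$), so $l_w=r_w=u$ as well, giving semicentrality.

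For the converse, assume $w$ is semicentral, so $h_w=k_w=r_w=l_w$; call this common factor $u$. The first key observation is that by the remark following the definition, a semicentral word is open, so by Proposition \ref{prop:cloStur} it must be non-Sturmian (a closed trapezoidal word is Sturmian, but an open one with $h_w=k_w$ cannot be a palindrome by the argument in Theorem \ref{theor:palclo}; more directly, I would argue it is non-Sturmian because $h_w=k_w$ together with openness forces a pathological configuration). Applying Theorem \ref{theor:dal}, write $w=pq$ with the stated structure, and use the identifications $K_w=|q|$ and that the longest right special factor is the prefix of length $R_w-1$. Since $w$ is open, Proposition \ref{prop:HRKL} gives $H_w=R_w$ and $K_w=L_w$. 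The condition $h_w=r_w$ says the longest repeated prefix equals the longest right special factor, which by Theorem \ref{theor:dal} is the proper prefix of $p$ of length $R_w-1$; combined with $h_w=k_w$ (longest repeated prefix equals longest repeated suffix), I can pin down $|p|$ and $|q|$ and show $u=h_w$ must be a central word via Lemma \ref{lem:clo1}'s analogue for the special factors. The crucial step is then showing the global form is exactly $uxyu$: I would argue that $|w|=|u|+2+|u|$ by combining $R_w+K_w=|w|$ (Proposition \ref{prop:trap}) with the equalities $R_w=H_w=|u|+1$ and $K_w=L_w=|u|+1$ forced by the coincidence of all four factors, so that $w=u\,\alpha\,\beta\,u$ for two letters $\alpha,\beta$; finally $\alpha\neq\beta$ follows because if $\alpha=\beta$ then $u\alpha$ would be a repeated prefix strictly longer than $h_w=u$, a contradiction.

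The main obstacle will be the length bookkeeping in the converse direction: carefully translating the four equalities $h_w=k_w=r_w=l_w$ into the numerical constraints $R_w=K_w=|u|+1$ and hence $|w|=2|u|+2$, while rigorously justifying that $u$ is central rather than merely a repeated prefix. I expect to lean on Lemma \ref{lem:clo1} (which handles the closed case) only for intuition, since semicentral words are open, so I will instead need a direct argument that $u=l_w=r_w$ is bispecial and, being a bispecial factor of the Sturmian-like pieces $p,q$ (which are Sturmian by Lemma \ref{lem:pk}), is central via Proposition \ref{prop:centrbis}. Ensuring that the two flanking letters genuinely differ, and that no longer special or repeated factor sneaks in, is the delicate point that ties the whole characterization together.
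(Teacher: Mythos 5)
There is a genuine gap, and it stems from a single misconception that undermines both directions: you assume that a word of the form $uxyu$ (and, in the converse, any semicentral word) is non-Sturmian, so that Theorem \ref{theor:dal} applies. This is false. Semicentral words are Sturmian (the paper even remarks that they are non-strictly bispecial Sturmian words): for instance $w=aaaabaaa=uxyu$ with $u=aaa$, $x=a$, $y=b$ is balanced, since it contains a single $b$, and $abaababa$ is a prefix of the Fibonacci word. In particular $uxyu$ does \emph{not} contain the pathological pair $(xux,yuy)$ --- e.g.\ $aaaabaaa$ contains neither $aaaaa$ nor $baaab$ --- so the planned verification of the factorization \eqref{eq:dal} cannot even get started. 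In the converse direction, the inference ``$w$ is open, hence by Proposition \ref{prop:cloStur} non-Sturmian'' reverses that proposition: it states that closed trapezoidal words are Sturmian, not that open ones are non-Sturmian (indeed $aaabaa$ is open and Sturmian, as the paper notes). Consequently the appeals to Theorem \ref{theor:dal}, to the pieces $p,q$, and to Lemma \ref{lem:pk} are all unavailable, and with them your route to the centrality of $u$, since Proposition \ref{prop:centrbis} requires first knowing that $u$ is bispecial in a Sturmian word.

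Two further steps are also unsupported. Your argument that $x\neq y$ (``if $\alpha=\beta$ then $u\alpha$ would be a repeated prefix longer than $h_w$'') does not follow: in $u\alpha\alpha u$ there is no evident second occurrence of $u\alpha$ (compare $abba$, where $ab$ occurs only once). The paper's proof avoids all of this by staying elementary. In the direction you found hard, it uses the fact that the bispecial factor $u$ must have a third occurrence in $w=uxyu$, necessarily not preceded by $y$ nor followed by $x$, yielding the word equation $uxyu=\alpha\bar yu\bar x\beta$ with $|\alpha|+|\beta|=|u|$; a case analysis on $|\alpha|$ then forces $x\neq y$ and, via the equation $u=u_1xyu_2=u_2\bar x\bar yu_1$ and Proposition \ref{prop:ipcent}, the centrality of $u$. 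For the other implication it shows directly that $xuy$ occurs in $uxyu$, whence $R_w\ge|u|+1$ and $K_w\ge|u|+1$, and concludes trapezoidality and semicentrality from the inequality \eqref{eq:RK} rather than from D'Alessandro's theorem. The length bookkeeping $|w|=2|u|+2$ in your converse is the one part that matches the paper; the rest needs to be rebuilt along these lines.
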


\begin{proof}
By~\eqref{eq:trap}, if $w$ is semicentral we can write $w=uxyu$, with
$u=h_{w}=k_{w}=r_{w}=l_{w}$ and $x,y$ (possibly equal) letters in $\Sigma$. Since $u$ is a bispecial factor of $w$, it must have a third occurrence in $w$; as $u=h_{w}=k_{w}$, such an occurrence cannot be preceded by $y$ or followed by $x$. We have then
\begin{equation}
\label{eq:sesq}
w=uxyu=\alpha\bar yu\bar x\beta
\end{equation}
where $\Sigma=\{x,\bar x\}=\{y,\bar y\}$, $\alpha,\beta\in\Sigma^{*}$ and
$|\alpha|+|\beta|=|u|$.
If $\alpha=u$, then $uxyu=u\bar yu\bar x$, so that $x\neq y$ and $yu=uy$. Thus $u=y^{n}$ for some $n\geq 0$, so that it is a central word. Similarly, if 
$\beta=u$ it follows that $x\neq y$ and $u=x^{n}$ is central.

Let then $0<|\alpha|,|\beta|<|u|$. From~\eqref{eq:sesq} it follows that
there exists a prefix $u_{1}x$ of $u$ such that $ux=\alpha\bar yu_{1}x$, and a suffix $yu_{2}$ of $u$ such that $yu=yu_{2}\bar x\beta$.
Hence, as $|u|=|\alpha|+|\beta|$, we obtain
\[u=u_{1}xyu_{2}=u_{2}\bar x\bar yu_{1}\;.\]
Again, this implies $x\neq y$, otherwise the number of occurrences of $x$ in $u_{1}xyu_{2}$ and $u_{2}\bar x\bar yu_{1}$ would be different. Thus, $u=u_{1}xyu_{2}=u_{2}yxu_{1}$ is central by
Proposition~\ref{prop:ipcent}.

Conversely, let $u$ be a central word and $w=uxyu$, with $\Sigma=\{x,y\}$. We claim that $xuy$ occurs in $w$. This is clear if $u$ is a power of $x$ or $y$; by Proposition~\ref{prop:ipcent}, the other possibility is $u=u_{1}xyu_{2}=u_{2}yxu_{1}$ for some words $u_{1},u_{2}$, so that
\[w=uxyu=u_{2}y\cdot xu_{1}xyu_{2}y\cdot xu_{1}\]
gives the desired occurrence of $xuy$. Hence $u$ is a right special factor and a repeated suffix of $w$, so that $R_{w}\ge |u|+1$ and $K_{w}\ge |u|+1$. Since $|w|=2(|u|+1)\le R_{w}+K_{w}$, it follows from \eqref{eq:RK} that $w$ is trapezoidal and $k_{w}=r_{w}=u$. Analogously, since $u$ is a left special factor and a repeated prefix of $w$, one proves that $h_{w}=l_{w}=u$.
\end{proof}

By Proposition \ref{prop:ipcent}, every central word $u$ different from a power of a single letter can be written as $u=u_{1}xyu_{2}=u_{2}yxu_{1}$ for central words $u_{1}$ and $u_{2}$ and distinct letters $x,y$. Theorem \ref{theor:sesqui} provides a similar characterization for semicentral words (and this motivates the choice of the name).

It is also worth noticing that semicentral words are non-strictly bispecial Sturmian words (see \cite{Fi12}), that is, for any semicentral word $w$ one has that $aw,bw,wa$ and $wb$ are all Sturmian words, but nevertheless either $awb$ or $bwa$ is not Sturmian.

\begin{example}
 The semicentral words of length $8$ are: $aaaabaaa$, $aaabaaaa$, $abaababa$, $ababaaba$, $bababbab$, $babbabab$, $bbbabbbb$ and $bbbbabbb$.
\end{example}

As a consequence of Theorem \ref{theor:sesqui}, we have that the number $SC(n)$ of semicentral words of length $n$ is:

\[SC(n) = \left\{ \begin{array}{lllll}
0, & \text{ if $n$ is odd,}\\
2\phi\left(\frac{n}{2}+1\right), & \text{ if $n$ is even.}
\end{array} \right.\]


\section{Open and closed prefixes of the Fibonacci word}\label{sec:fibo}

In this section, we give an interesting example of the open/closed dichotomy in trapezoidal words, by analyzing prefixes of the Fibonacci infinite word under this new light. The Fibonacci infinite word $f$ is the word \[f=abaababaabaababaababaabaaba\cdots\] obtained as the limit of the substitution $a\mapsto ab$, $b \mapsto a$. 
Note that every prefix of the Fibonacci word is Sturmian, and therefore trapezoidal.

We investigate which prefixes of $f$ are open and which are closed. If one writes up the list of the prefixes of $f$ for each length (see Table \ref{tab:fibo}), can notice that the sequences of ``open'' and ``closed'' alternate following the Fibonacci sequence $F_{i}$ (defined by $F_{1}=F_{2}=1$ and $F_{i+2}=F_{i+1}+F_{i}$  for every $i\ge 1$). More precisely, the numbers of consecutive prefixes that are closed (resp.~open), that is, the lengths of the runs of ``c'' and ``o'' in  Table \ref{tab:fibo}, are the Fibonacci numbers. In Theorem \ref{theor:fibo} we show this fact. 

\begin{table}[ht]
\centering  
\begin{small}
\begin{tabular}{r*{9}{@{\hspace{1.5em}}c}*{11}{c}}
 $n$    & 1 & 2 & 3 & 4 & 5 & 6 & 7 &
8 & 9 & 10 & 11 & 12 & 13 & 14 & 15 & 16 & 17 & 18 & 19 & 20\\
\hline \\[-1ex]
$f_{1\cdots n}$ & c & o & c & o & c & c & o & o & c & c & c & o & o & o & c & c & c & c & c & o\\[0.5ex]
\hline 
\end{tabular}
\label{tab:fibo}
\end{small}
\caption{Which prefixes of the Fibonacci word are closed and which are open.}
\end{table}

%

\begin{theorem}\label{theor:fibo}
Let $w$ be a prefix of the Fibonacci word $f$. Then $w$ is open if and only if there exists $i$ such that $F_{i+1}-1\le |w| \le 2F_{i}-2$.
\end{theorem}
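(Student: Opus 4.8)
The plan is to use the explicit structure of the Fibonacci word in terms of its finite Fibonacci prefixes and the known formulas for the special factors and periods of its prefixes. Recall that the finite Fibonacci words $f_i$ have lengths $F_i$ (the Fibonacci numbers), and that the bispecial factors of $f$ are exactly the central words $u_i$ obtained by removing the last two letters of $f_i$, so that $|u_i|=F_i-2$. The key known facts I would invoke are: the longest repeated prefix of a prefix $w$ of $f$ of length $n$ is governed by the occurrences of the special factors of $f$, and by Proposition~\ref{prop:openspe}, $w$ is open precisely when its longest repeated prefix $h_w$ coincides with its longest right special factor $r_w$. Since every prefix of $f$ is trapezoidal, this characterization applies throughout.

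First I would pin down, for a prefix $w$ of length $n$, the values of $R_w$ (equivalently the longest right special factor $r_w=h_w$ has length $R_w-1$) and $H_w$ (the length of the longest repeated prefix). The right special factors of $f$ form a chain, and the longest right special factor occurring in a prefix of length $n$ is determined by which Fibonacci-block boundaries $n$ has crossed; concretely, for $n$ in a given range the longest right special factor has a fixed length, and it jumps to the next central word only once $n$ is large enough to contain a second occurrence of that factor followed by a new letter. The plan is to compute these thresholds explicitly: for the central word $u_i$ of length $F_i-2$, determine the smallest prefix length at which $u_i$ becomes right special (two occurrences with different following letters) and the smallest length at which the longer factor $u_{i+1}$ takes over. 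Matching these thresholds against the repeated-prefix length $H_w$, and using Proposition~\ref{prop:openspe}, should show that openness holds exactly on the interval $F_{i+1}-1\le n\le 2F_i-2$ and closedness on the complementary interval $2F_i-1\le n\le F_{i+2}-2$.

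The cleanest route is probably to argue directly about the two occurrences of the relevant central word. For the claimed open interval, I would show that in the prefix $w$ of length $n$ with $F_{i+1}-1\le n\le 2F_i-2$, the longest repeated prefix $h_w$ equals the central word $u_i$ (of length $F_i-2$) together with the appropriate extension, and that this $h_w$ is right special because $n$ is large enough to exhibit two continuations but not large enough to have closed up into a complete return. Conversely, at $n=2F_i-1$ the word completes a return to $h_w$ (the second occurrence becomes a genuine suffix occurrence with no internal repetition of the following letter), making $w$ closed; this closedness then persists until $n=F_{i+2}-2$, at which point a new, longer repeated prefix appears and right-specialness is recovered. I would verify the boundary values $n=F_{i+1}-1$, $n=2F_i-2$, $n=2F_i-1$, $n=F_{i+2}-2$ by hand against the table to fix orientation.

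The main obstacle I anticipate is the bookkeeping of occurrences: I must precisely track, for each threshold, where the second occurrence of the current longest repeated prefix sits and what letter follows it, since the whole open/closed distinction turns on whether that occurrence is internal (with a differing continuation, giving right-specialness and hence openness) or terminal (giving a complete return and hence closedness). This requires the standard combinatorial lemmas on positions of factors in the Fibonacci word — in particular the fact that occurrences of $f_i$ in $f$ are spaced according to the Fibonacci recurrence, and that the gaps between consecutive occurrences of a given special factor take only two values in the ratio of consecutive Fibonacci numbers. Once those occurrence positions are established, the inequalities $F_{i+1}-1\le n\le 2F_i-2$ should fall out as the exact range in which $r_w=h_w$, and the theorem follows from Proposition~\ref{prop:openspe}. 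An alternative, possibly shorter, approach would be an induction on $i$ using the singular-word decomposition $f=\cdots$ and the self-similar structure of returns in $f$, but I expect the direct occurrence-counting argument to be the most transparent.
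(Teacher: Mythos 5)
Your overall framework is the right one and matches the spirit of the paper's argument (both reduce openness to the behaviour of the longest repeated prefix, via Proposition~\ref{prop:openspe} or the definition of closed word), but as written the proposal has a genuine gap: the entire content of the theorem is the exact determination of the longest repeated prefix $h_w$ for each prefix length $n$, and you never actually carry this out. Phrases such as ``I would compute these thresholds explicitly'', ``the inequalities should fall out'', and ``this requires the standard combinatorial lemmas on positions of factors'' defer precisely the step on which everything turns. Moreover, the one place where you do commit to a value of $h_w$ is imprecise in a way that suggests the bookkeeping has not been pinned down: on the open interval $F_{i+1}-1\le n\le 2F_i-2$ the longest repeated prefix is exactly the central prefix $s_i$ of length $F_i-2$, with \emph{no} extension, whereas on the closed interval it is a strictly longer border $s_ix$, $s_ixy$, \dots\ that grows with $n$; and at $n=2F_i-1$ the word is a complete return to its new border $s_ix$, not to the old $h_w=s_i$. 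An occurrence-counting proof along your lines can certainly be completed, but the return-word/gap lemmas you invoke would themselves need to be stated and applied carefully, and none of that is present.

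For comparison, the paper avoids the occurrence bookkeeping entirely by anchoring each interval at its two endpoints. The prefix of length $F_{i+1}-2$ is the palindrome $s_{i+1}$, which is closed with longest repeated prefix $s_i$ by Theorem~\ref{theor:palclo}; the prefix of length $2F_i-2$ is the semicentral word $s_ixys_i$, which also has longest repeated prefix $s_i$ by Theorem~\ref{theor:sesqui}. Since the longest repeated prefix is nondecreasing along prefixes, it equals $s_i$ for every $n$ in between, and there it has an internal occurrence (coming from $s_{i+1}=s_{i-1}y_ix_is_i$), so all those prefixes are open. For the closed interval one writes $s_{i+2}=s_ixys_ixys_{i-1}$ and observes that any prefix longer than $s_ixys_i$ has a border strictly longer than $s_i$ with no internal occurrence, since otherwise $s_ix$ would occur twice in $s_ixys_i$. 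If you want to salvage your approach, the cleanest fix is to adopt exactly this endpoint-plus-monotonicity argument in place of the unproven threshold computations.
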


\begin{proof}
First, let us recall some well known properties of the Fibonacci word. Let $S=\{\epsilon,a,aba,abaaba,abaababaaba,\ldots\}$ be the set of palindromic prefixes of $f$. It is well known that the elements of $S$ are the prefixes $s_{i}$ of $f$ such that $|s_{i}|=F_{i}-2$, $i\ge 3$.  The words in $S$ are all central words. 
Also, for every $i\geq 4$ we have
\begin{equation}
\label{eq:xiyi}
s_{i+1}=s_{i}x_iy_is_{i-1}=s_{i-1}y_ix_is_{i},
\end{equation}
where $x_i=b$, $y_i=a$ for even $i$, and $x_i=a$, $y_i=b$ for odd $i$.

Our assertion is trivially verified if $|w|\leq 3=F_{5}-2$. Hence we only need to show that for all $i\geq 4$, a prefix $w$ of $f$ with
$F_{i+1}-1\leq |w|<F_{i+2}-1$ is
\begin{enumerate}
\item open if $F_{i+1}-1\leq |w|\leq 2F_{i}-2$, and
\item closed otherwise, i.e., when $2F_{i}-1\leq |w|\leq F_{i+2}-2$.
\end{enumerate}

Let us fix $i\geq 4$ and set $x=x_i$, $y=y_i$ as in~\eqref{eq:xiyi}. We know that the prefix of $f$ of length $F_{i+1}-2$ is $s_{i+1}$. By Theorem~\ref{theor:palclo}, it is a closed word and its longest repeated prefix is
$s_{i}$. Moreover, $s_{i}$ is also the longest repeated prefix of the semicentral word $s_{i}xys_{i}$, that is the prefix of $f$ of length $2F_{i}-2$. Therefore, $s_{i}$ is the longest repeated prefix for all $w$'s in between, i.e., for each $w\in\Pref(f)$ with $F_{i+1}-1\leq |w|\leq 2F_{i}-2$. As $s_{i}$ has an internal occurrence in all such prefixes, they are all open, proving point 1.

It remains to show that any prefix of
\[s_{i+2}=s_{i}xys_{i+1}=s_{i}xys_{i}xys_{i-1}\]
longer than $s_{i}xys_{i}$ is closed. Indeed, such words have a border which is strictly longer than $s_{i}$, and which cannot have any internal occurrences for otherwise $s_{i}x$ would have a second occurrence in $s_{i}xys_{i}$. 
\end{proof}

\begin{table}[h]
\begin{center}
  \begin{tabular}{| l | l | c | l | c |}
  
    prefix of $f$  & length &  open/closed & example  \\    \hline 
     $s_{i}xys_{i-1}=s_{i+1}$     &   $F_{i+1}-2$       & closed   & $abaababaaba$        \\
     $s_{i}xys_{i-1}y$    &   $F_{i+1}-1$       & open   &    $abaababaabaa$      \\   
     $s_{i}xys_{i-1}yx$   &   $F_{i+1}$         & open   & $abaababaabaab$        \\
     \ldots      &   \ldots          & \ldots   &  \ldots        \\
	 $s_{i}xys_{i}$     &   $2F_{i}-2$       & open      & $abaababaabaaba$       \\
	 $s_{i}xys_{i}x$     &   $2F_{i}-1$       & closed   & $abaababaabaabab$          \\
	 $s_{i}xys_{i}xy$     &   $2F_{i}$       & closed      & $abaababaabaababa$       \\
	 \ldots      &   \ldots          & \ldots      &  \ldots      \\
	 $s_{i}xys_{i+1}=s_{i+2}$   &   $F_{i+2}-2$     & closed   & $abaababaabaababaaba$ \\
	 $s_{i}xys_{i+1}x$   &   $F_{i+2}-1$     & open   & $abaababaabaababaabab$ \\
	 
    \hline
  \end{tabular}
\end{center}\caption{The structure of the prefixes of $f$ with respect to the palindromic prefixes $s_{i}$.}\label{tab:example}
\end{table}

\begin{corollary}
 The longest word in a run of closed prefixes of $f$ is a central word. The longest word in a run of open prefixes of $f$ is a semicentral word.
\end{corollary}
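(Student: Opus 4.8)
The plan is to read off both claims directly from the structural description of runs provided by Theorem~\ref{theor:fibo} and its accompanying Table~\ref{tab:example}, rather than re-deriving everything from scratch. Fix $i\geq 4$. By Theorem~\ref{theor:fibo}, a run of open prefixes consists exactly of those $w\in\Pref(f)$ with $F_{i+1}-1\leq |w|\leq 2F_{i}-2$, and a run of closed prefixes consists of those with $2F_{i}-1\leq |w|\leq F_{i+2}-2$. Thus the longest prefix in a run of closed prefixes has length $F_{i+2}-2$, and the longest prefix in a run of open prefixes has length $2F_{i}-2$. The whole argument then reduces to identifying these two specific prefixes.

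First I would handle the closed case. The prefix of $f$ of length $F_{i+2}-2$ is precisely the palindromic prefix $s_{i+2}$, by the identity $|s_{j}|=F_{j}-2$ recalled at the start of the proof of Theorem~\ref{theor:fibo}. Since every $s_j$ is a central word (this is the well-known fact that the palindromic prefixes of the Fibonacci word are central, stated in that same proof), the longest word in a run of closed prefixes is a central word, as claimed.

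Next I would handle the open case. The longest open prefix has length $2F_{i}-2$, which is exactly the length of the word $s_{i}xys_{i}$ appearing in~\eqref{eq:xiyi} and in Table~\ref{tab:example} (with $x=x_i$, $y=y_i$, so $x\neq y$). By Theorem~\ref{theor:sesqui}, a word of the form $uxyu$ with $u$ central and $x\neq y$ is semicentral; applying this with $u=s_{i}$ (which is central) and the distinct letters $x_i,y_i$ shows that the longest word in a run of open prefixes is semicentral.

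I do not expect any serious obstacle here, since the corollary is essentially a re-packaging of facts already established in the proof of Theorem~\ref{theor:fibo}: the run boundaries, the identification of the length-$(F_{i+2}-2)$ prefix with the central word $s_{i+2}$, and the identification of the length-$(2F_i-2)$ prefix with the semicentral word $s_ix_iy_is_i$ via Theorem~\ref{theor:sesqui}. The only point requiring a line of care is the boundary bookkeeping --- confirming that the maximal length in each run is attained by exactly these two distinguished prefixes $s_{i+2}$ and $s_ix_iy_is_i$ rather than by a neighboring length --- but this follows immediately from the explicit inequalities in Theorem~\ref{theor:fibo} and is already laid out row by row in Table~\ref{tab:example}.
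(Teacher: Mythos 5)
Your proof is correct and takes essentially the same route as the paper, which leaves this corollary without an explicit proof precisely because the run boundaries in Theorem~\ref{theor:fibo} (and Table~\ref{tab:example}) already identify the longest closed prefix as the central word $s_{i+2}$ and the longest open prefix as $s_{i}x_{i}y_{i}s_{i}$, semicentral by Theorem~\ref{theor:sesqui}. The only thing you might add in passing is a one-line check of the initial singleton runs at lengths $1$, $2$, $3$ (i.e.\ $a$, $ab$, $aba$), which fall outside the range $i\geq 4$ but satisfy the claim trivially.
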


\section{Conclusions and open problems}\label{sec:conc}

In this paper we have investigated trapezoidal words, which are a natural generalization of finite Sturmian words. We exhibited a formula for counting trapezoidal words and studied open and closed trapezoidal words separately. 

Trapezoidal words form a subclass of the class of rich words, that are words containing the maximum number of palindromic factors. A challenging problem is that of finding an enumerative formula for rich words. This problem is still open, even in the binary case. We think that separating rich words in open and closed could give further insights on these words.

More generally, we think that the open/closed dichotomy can be useful in the study of other classes of words. For instance, it would be interesting to generalize the results obtained in Section~\ref{sec:fibo} for the Fibonacci word to the class of standard words.


\end{document}